\newtheorem{thm}{Theorem}[section]
\newtheorem{lem}[thm]{Lemma}
\newtheorem{eg}[thm]{Example}
\newtheorem{prop}[thm]{Proposition}
\newtheorem{cor}[thm]{Corollary}
\newtheorem{defn}[thm]{Definition}
\newtheorem{rem}[thm]{Remark}
\newtheorem{conj}[thm]{Conjecture}
\newtheorem{thm2}{Theorem}
\newtheorem{prop2}[thm2]{Proposition}
\newcommand{\smnoind}{\smallskip\noindent}
\newcommand{\sa}{{\rm sa}}
\newcommand{\ti}{\tilde}
\newcommand{\mo}{\mathrm{o}}
\newcommand{\mc}{\mathrm{c}}
\newcommand{\mm}{\mathbf{m}}
\newcommand{\mq}{\mathrm{q}}
\newcommand{\RP}{\mathbb{R}_+}
\newcommand{\BC}{\mathbb{C}}
\newcommand{\BZ}{\mathbb{Z}}
\newcommand{\BN}{\mathbb{N}}
\newcommand{\BR}{\mathbb{R}}
\newcommand{\BM}{\mathbb{M}}
\newcommand{\BP}{\mathbb{P}}
\newcommand{\CS}{\mathcal{S}}
\newcommand{\CC}{\mathcal{C}}
\newcommand{\CP}{\mathcal{P}}
\newcommand{\CB}{\mathcal{B}}
\newcommand{\CO}{\mathcal{O}}
\newcommand{\CQ}{\mathcal{Q}}
\newcommand{\CR}{\mathcal{R}}
\newcommand{\CL}{\mathcal{L}}
\newcommand{\CT}{\mathcal{T}}
\newcommand{\CU}{\mathcal{U}}
\newcommand{\CM}{\mathcal{M}}
\newcommand{\KI}{\mathfrak{I}}
\newcommand{\KG}{\mathfrak{G}}
\newcommand{\KH}{\mathfrak{H}}
\newcommand{\KP}{\mathfrak{P}}
\newcommand{\bs}{\mathbf{s}}
\newcommand{\bz}{\mathbf{z}}
\newcommand{\ba}{\mathbf{a}}
\newcommand{\hull}{\mathrm{hull}}
\begin{document}
\title{Quantum sets and Gelfand spectra\\
\emph{(names for some terminologies are changed in the published version and the new title is ``Ortho-sets and Gelfand spectra'')}}

\author{Chun Ding \and Chi-Keung Ng}

\address[Chun Ding]{Mathematical Institute, Leiden University, The Netherlands.}
\email{c.ding@math.leidenuniv.nl}

\address[Chi-Keung Ng]{Chern Institute of Mathematics and LPMC, Nankai University, Tianjin 300071, China.}
\email{ckng@nankai.edu.cn; ckngmath@hotmail.com}

\keywords{$C^*$-algebras; closed projections; Dye's theorem; quantum sets; quantum topological spaces; ortholattices; orthomodular lattices; non-commutative Gelfand theorem}

\subjclass[2010]{Primary: 06C15, 54A05, 46L05, 46L30, 81P10; Secondary: 03G12, 81P05, 81P16}

\date{\today}
\maketitle

\begin{abstract}
Motivated by quantum states with zero transition probability, we introduce the notion of quantum set which is a set equipped with a relation $\neq_\mathrm{q}$ satisfying: $x\neq_\mathrm{q} y$ implies both $x\neq y$ and $y \neq_\mathrm{q} x$. 
For a quantum set, a canonical complete ortholattice is constructed. 
This ortholattice is orthomodular if and only if the quantum set satisfies a canonical condition concerning subsets. 
This produces a surjective correspondence from the collection of quantum sets to the collection of complete ortholattices, and hence, the theory of quantum sets captures almost everything about quantum logic.

We also introduce the Gelfand spectrum for a quantum system modeled on the self-adjoint part $B_\mathrm{sa}$ of a $C^*$-algebra $B$, by defining a ``quantum topology'' on the quantum set  
of pure states of $B$, via a hull-kernel construction using closed left ideals.  
We establish a generalization of the Gelfand theorem by showing that a bijection between the ``semi-classical object'' of the Gelfand spectra of two quantum systems that preserves the respective quantum topological structures is induced by a Jordan isomorphism between the self-adjoint parts (i.e. an isomorphism of the quantum systems), when the underlying $C^*$-algebras satisfy a mild condition. 
\end{abstract}

\medskip

\section{Introduction}

\medskip

In the classical world, two points are either the same or distinct; but in the quantum world, one sometimes wants to consider a more restrictive form of ``distinctness''; e.g., two quantum states are thought to be ``really distinct''  if they have zero transition probability. 
We give a mathematical presentation of this by introducing the notion of ``quantum sets''. 

\medskip

Let us state these in more precise terms. 
We call a symmetric subrelation $\neq_\mq$ of the usual distinctness relation $\neq$ on a set $X$ a ``q-distinctness relation''. 
In this case, we say that $(X,\neq_\mq)$ is a quantum set. 
The q-distinctness relation induces a q-complement on $X$ (see \eqref{eqt:def-quantum compl}). 
Subsets of $X$ are called q-subsets if they are the q-complements of some other subsets.
The collection of all q-subsets of $X$ will be denoted by $\CQ(X)$. 

\medskip

Note that quantum sets as above are different from the varies notions of quantum set theory found in the literature; e.g., \cite{Gud,Schles,Takeuti,TK}.
They are also different from quantum sets as defined in \cite[Definition 9]{DRZ}. 

\medskip

One may identify a quantum set in our sense as a graph with no loop (where two elements in the set are joined by an edge if they are q-distinct). 
In this respect, ``quantum automorphism groups'' of finite quantum sets have already been studied in the literature (see, e.g., \cite{Ban, BBC, BGS, JSW, MRV, Sch}).

\medskip

Several natural examples of quantum sets will be given in Section 2. 
In particular, if certain system is modeled on a metric space such that there is an ``uncertainty'', in the sense that two points cannot be theoretically separated when their distance is less than a certain amount $\delta$, then one can view the system as a quantum set. 
In this case, the  ``logic'' of this system (i.e. the theory of q-subsets) will be very different from the classical ones. 
We will have a closer look at the case of the Euclidean metric space $\BR$ in Example \ref{eg:uncert}. 
An example concerning transition probability will also be considered (see Example \ref{eg:tran-prob}). 

\medskip

We say that a quantum set is atomic if all the singleton subsets are q-subsets. 
Moreover, we say that a quantum set $X$ is hereditary if for every element $T\in \CQ(X)$, one has 
$$\CQ(T) = \{S\in \CQ(X): S\subseteq T \},$$ 
when $T$ is equipped with the q-distinctness relation induced from $X$. 

\medskip

The following result  (which can be found in Lemma \ref{lem:lattice} and Proposition \ref{prop:atom-hered-q-distinct}) means that, similar to the relation between ordinary set theory and ordinary logic, quantum set theory can be regarded as the ``set theory''  behind quantum logic.

\medskip

\begin{prop2}\label{prop:quantum set-ortho-lat}
(a) If $(X, \neq_\mq)$ is a quantum set, then $\CQ(X)$ is a complete ortholattice, under the intersection, the q-union as defined in \eqref{eqt:def-q-union} and the q-complement.
Moreover, $(X,\neq_\mq)$ is hereditary if and only if $\CQ(X)$ is an orthomodular  lattice

\smnoind
(b) For an ortholattice $\CL$, there is a quantum set $\CL^\star$ with $\CL$ being a sub-ortholattice of $\CQ(\CL^\star)$. 
The assignment $\CL \mapsto \CL^\star$ is an injective correspondence from the collection of ortholattices to that of quantum sets. 
When $\CL$ is complete,  one has $\CL = \CQ(\CL^\star)$.

\smnoind
(c) There is a canonical bijective correspondence between the collection of complete atomistic ortholattices and that of atomic quantum sets (respectively, the collection of complete atomic orthomodular lattices and that of atomic hereditary quantum sets). 
\end{prop2}

\medskip

For a quantum system modeled on the self-adjoint part $B_\sa$ of a (complex) $C^*$-algebra $B$, we introduce the semi-classical object of Gelfand spectrum for this system as follows. 
Let $\KP^B$ be the set of all pure states on $B$. 
For any $\phi,\psi \in \KP^B$, we denote $\phi\neq_\mo \psi$ if $\phi$ and $\psi$ have orthogonal support projections; i.e., $\phi$ and $\psi$ has zero transition probability.  
For any left closed ideal $L\subseteq B$, we set 
$$\hull(L):=\{\phi\in \KP^B: \phi(x^*x) = 0, \text{ for every }x\in L \}.$$
Then $\hull(L)$ is a q-subset of $(\KP^B, \neq_\mo)$, and the collection of all such q-subsets form a quantum topology $\CC^B$ on $(\KP^B, \neq_\mo)$, in the sense of Definition \ref{defn:quantum top} (see Proposition \ref{prop:quantum top}). 

\medskip

Notice that our notion of 
Gelfand spectra is similar to the ideas in \cite{Ake71}. 
However, unlike \cite{Ake71}, where the atomic part of the bidual of  $B$ together with the set of q-open projections are considered, %
the Gelfand spectrum is a ``semi-classical image'' of this structure on the pure state space of $B$ (see Lemma \ref{lem:PS-min-proj}). 
On the other hand, a similar structure on the pure state space of $B$ was introduced in \cite{GK}, but the q-distinctness relation $\neq_\mo$ was not considered in \cite{GK}.

\medskip

The  Gelfand spectrum for a $C^*$-algebra captures the self-adjoint part of original algebra up to a Jordan isomorphism (under a mild assumption), which is good enough for the consideration of physical structure modeled on the self-adjoint parts of $C^*$-algebras. 
Let us recall that a linear map $\Gamma$ from the self-adjoint part of a $C^*$-algebra $A$ to that of another $C^*$-algebra is a \emph{Jordan isomorphism} if it preserves the Jordan product; i.e. $\Theta(ab + ba)= \Gamma(a)\Gamma(b) + \Gamma(b)\Gamma(a)$ ($a,b\in A_\sa$). 
The following non-commutative generalization of the Gelfand theorem will be proved in Theorem \ref{thm:main2} and Corollary \ref{cor:main2}.

\medskip

\begin{thm2}\label{thm:quantum spec}
Let $A$ and $B$ be two $C^*$-algebras.
Suppose that there is a bijection  $\Psi:\KP^A \to \KP^B$ preserving the q-distinctness relations such that $\CC^B = \big\{\Psi(C): C\in \CC^A \big\}$.

\smnoind
(a) If $A$ has no 2-dimensional irreducible $^*$-representation, then there is a Jordan isomorphism $\Gamma: B_\sa\to A_\sa$ such that $\Psi(\omega) = \omega\circ \Gamma$ ($\omega\in \KP^A$). 

\smnoind
(b) 
%
If $A$ is simple (including the case when $A=\BM_2$), then $A$ and $B$ are either $^*$-isomorphic or $^*$-anti-isomorphic. 
\end{thm2}



\medskip

The above tells us that the Gelfand spectra give a faithful presentation of quantum systems that are modelled on self-adjoint parts of $C^*$-algebras; in the sense that ``quantum homeomorphisms'' of Gelfand spectra are induced by isomorphisms of the quantum systems (if the underlying $C^*$-algebras have no $2$-dimensional irreducible representations). 
This semi-classic picture of quantum systems  may help to establish a link with classical systems.

\medskip

	If one wants a stronger conclusion of a $^*$-isomorphism in Theorem \ref{thm:quantum spec}(a), one needs to add some extra structures. 
	One possibility is the ``orientation structure'' as introduced in \cite{AHS} and \cite{Shu82}. 
	However, this structure seems a bit complicated.  
In a later work (\cite{Ng-Cat-QS}), we will introduce the notion of signature on the pure state space, and show that if the map $\Psi$ in Theorem \ref{thm:quantum spec} also preserves the canonical signatures on the pure state spaces, then $\Gamma$ can be extended to a $^*$-isomorphism from $A$ onto $B$.

\medskip

In \cite{Ng-Cat-QS}, we will also consider 
a ``larger quantum spectrum'' for a $C^*$-algebra, and show that the category of self-adjoint parts of unital $C^*$-algebras having no 2-dimensional irreducible representation, equipped with unital Jordan homomorphisms as morphisms, is a full subcategory of the category of quantum topological spaces (under as a suitable definition of morphisms). 
The functor sending $A_\sa$ to its Gelfand spectrum will also be further investigated in \cite{Ng-Cat-QS}.

\medskip

The proof of Theorem \ref{thm:quantum spec} requires results concerning q-closed projections as studied in \cite{Akemann68, Ake69, Ake71, APT73}, and the Dye theorem for von Neumann algebras (\cite{Dye}). 
More precisely, we will first prove the following version of Dye theorem for $C^*$-algebras (see Theorem \ref{thm:main}).  

\medskip

\begin{thm2}\label{thm:closed-proj}
	Let $A$ and $B$ be $C^*$-algebras such that $\BM_2$ is not a quotient $C^*$-algebra of $A$.
	If there is a bijection $\Phi$ from the set $\CC(A)$ of q-closed projections of $A$ onto that of $B$ that respects the q-distinctness relations induced by orthogonality, there is a unique Jordan isomorphism $\Gamma: A_\sa\to B_\sa$ with $\Phi = \Gamma^{**}|_{\CC(A)}$. 
\end{thm2}

\medskip

The readers should note the difference between Theorem \ref{thm:closed-proj} and \cite[Corollary I.2]{Ake71}.
In \cite[Corollary I.2]{Ake71}, one starts with a $^*$-isomorphism between the atomic parts of the biduals of the $C^*$-algebras that preserves the corresponding q-closed projections. 
However, in Theorem \ref{thm:closed-proj}, we only need isomorphism for the ``graphs of q-closed projections'' (where two projections are joined by an edge if they are orthogonal).

\medskip

Finally, let us declare that all vector spaces and algebras in this paper are over the complex field, except for the self-adjoint parts of $C^*$-algebras.

\medskip

\section{Quantum sets and their relation to ortholattices}

\medskip

When a set  $X$ is equipped with a symmetric subrelation $\neq_\mq$ of the usual distinctness relation $\neq$, we say that $(X,\neq_\mq)$ is a \emph{quantum set} and $\neq_\mq$ is called a \emph{q-distinctness relation}. 
Needless to say, $\neq$ is itself a q-distinctness relation.
In the case when $\neq_\mq$ coincides with $\neq$, we say that $(X, \neq_\mq)$ is \emph{classical}. 

\medskip

We define the \emph{q-complement} on the collection $\CP(X)$ of all subsets of $X$ as follows:
\begin{equation}\label{eqt:def-quantum compl}
D^\mc := \{y\in X: y\neq_\mq z,  \text{ for any }z\in D\} \qquad (D\in \CP(X)).
\end{equation}
Moreover, we set $D^{\mc\mc} := (D^\mc)^\mc$. 
The q-distinctness relation $\neq_\mq$ on $X$ induces a q-distinctness relation, again denoted by $\neq_\mq$, on $D$. 

\medskip

\begin{defn}
(a) If $S\in  \CP(X)$ satisfying $S = S^{\mc\mc}$, then $(S, \neq_\mq)$ is called a \emph{q-subset} of $(X, \neq_\mq)$. 
The collection of all q-subsets of $X$ will be denoted by $\CQ(X, \neq_\mq)$, or simply by $\CQ(X)$.

\smnoind
(b) If $(Y, \neq_\mq)$ is another quantum set, then a bijection $\Psi:X\to Y$ is called a \emph{strict quantum bijection} if $\Psi$ preserves the q-distinctness relation in both directions, i.e., 
$\Psi(\{x\}^\mc) = \{\Psi (x)\}^\mc$ $(x\in X).$
\end{defn}

\medskip

Note that $D\mapsto D^{\mc\mc}$ is a closure operator on $\CP(X)$ that sends $D\in \CP(X)$ to the smallest element in $\CQ(X)$ containing $D$. 

\medskip

Suppose that $C\in \CP(X)$. 
A subset $D\subseteq C$ belongs to $\CQ(C)$, when $C$ is equipped with the induced q-distinctness relation, if and only if 
\begin{equation}\label{eqt:quan-comp-in-quan-subset}
D=(D^\mc \cap C)^\mc \cap C.
\end{equation}
Note that $\CQ(C)\subseteq \CQ(X)$ if and only if $C\in \CQ(X)$. 

\medskip

It is natural to ask when we have $\CQ(T) = \{S\in \CQ(X): S\subseteq T \}$, for every q-subset  $T\in \CQ(X)$. 
Notice that it is the case if and only if the closure operator on $\CP(T)$ given by the induced q-distinctness relation on $T$ coincides with the restriction of the closure operator on $\CP(X)$ to $\CP(T)$. 

\medskip

On the other hand, it is natural to ask when every singleton subset of $X$ is a q-subset.

\medskip

\begin{defn}\label{defn:atomic-hered}
A q-distinctness relation $\neq_\mq$ on $X$ is said to be 

\begin{enumerate}[i)]
	\item \emph{atomic} if $\{x\}^{\mc\mc} = \{x\}$ for every $x\in X$ (i.e. all singleton sets are q-subsets);
	
	\item \emph{hereditary} if for each $T\in \CQ(X)$, one has $\{S\in \CQ(X): S\subseteq T \} = \CQ(T)$.
	\end{enumerate}
We say that the quantum set $(X, \neq_\mq)$ is \emph{atomic} (respectively, \emph{hereditary}) if $\neq_\mq$ is atomic (respectively, hereditary). 
\end{defn}

\medskip

\begin{defn}
Let $(\CL, \wedge, \vee)$ be a lattice with a smallest element $0$.

\smnoind
(a)  Let $\CL^\mm$ be the set of minimal elements in $\CL\setminus \{0\}$, called the \emph{atoms} of $\CL$. Then $\CL$ is said to be 
\begin{itemize}
	\item \emph{atomic} if for each $p\in \CL\setminus \{0\}$, one can find $e\in \CL^\mm$ with $e\leq p$;
	
	\item \emph{atomistic} if 
	$\bigvee \{e\in \CL^\mm: e\leq p \} \text{ exists and equals } p$, for every $p\in \CL\setminus \{0\}$. 
\end{itemize}

\smnoind
(b) Suppose that there is an operator $':\CL \to \CL$ satisfying $p = (p')'$, $0 = p \wedge p'$, and $p'\leq q'$  for every $p, q\in \CL$ with $q\leq p$, called an \emph{orthocomplementation}. 
Then $\CL$ is called an \emph{ortholattice}.
In this case, $0'$ will be denoted by $1$. 

\smnoind
(c) An ortholattice $\CL$ is called an \emph{orthomodular lattice} if for every $p,q\in \CL$ with $p\leq q$, one has $q = p \vee (q\wedge p')$ (this is called the \emph{orthomodular law}).
\end{defn}

\medskip

Some people define ``atomistic lattice'' in a way that every non-zero element is a finite join of atoms, but this definition is different from the above (unless every non-zero element dominates a finite number of atoms). 
Furthermore, as noted in \cite[p.140]{Kal83}, 
\begin{quotation}
a complete orthomodular lattice is atomistic if and only if it is atomic. 
\end{quotation}
More information on ortholattices and orthomodular lattices can be found in standard textbooks (e.g., \cite{Kal83} and \cite{Kal98}).

\medskip

Since $D\mapsto D^{\mc\mc}$ is a closure operator on $\CP(X)$, it is well-known that $\CQ(X)$ is a complete lattice under the usual conjunction $\cap$ and the adapted disjunction $\vee$, called the \emph{q-union}, defined as follows (see \cite[p.254]{Kal83}): 
\begin{equation}\label{eqt:def-q-union}
S\vee T:= (S\cup T)^{\mc\mc} \qquad (S,T\in \CQ(X)). 
\end{equation}

\medskip

\begin{lem}\label{lem:lattice}
Let $(X,\neq_\mq)$ and $(Y, \neq_\mq)$ be two quantum sets.

\smnoind
(a) $(\CQ(X), \cap, \vee,\ \!^\mc )$ is a complete ortholattice. 

\smnoind
(b) The quantum set $(X, \neq_\mq)$ is hereditary if and only if for every $S, T\in \CQ(X)$ satisfying $S\subseteq T$ and $T\cap S^\mc = \emptyset$, one has $S = T$; which is equivalent to 
$\CQ(X)$ being an orthomodular lattice.  

\smnoind
(c) If $(X, \neq_\mq)$ is atomic,  then $\CQ(X)$ is an atomistic lattice, and one has $\CQ(X)^\mm = \big\{\{x\}: x\in X \big\}$.

\smnoind
(d) A bijection $\Phi:X\to Y$  is a strict quantum bijection if and only if it induces an ortholattice isomorphism from $\CQ(X)$ onto $\CQ(Y)$. 
\end{lem}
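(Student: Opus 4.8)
## Proof Proposal for Lemma 1.x (the four-part lemma)

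The plan is to establish parts (a)--(d) in order, since each builds on the closure-operator structure already recorded in the excerpt. Throughout I will use freely that $D \mapsto D^{\mathrm{cc}}$ is a closure operator on $\CP(X)$ whose fixed points are exactly the q-subsets, and that $\vee$ is defined by $S \vee T = (S \cup T)^{\mathrm{cc}}$.

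For part (a), the completeness of $(\CQ(X), \cap, \vee)$ as a lattice is already granted by the general theory of closure operators, so the work is to verify that $^{\mathrm{c}}$ is a genuine orthocomplementation. I would check the three defining axioms directly from \eqref{eqt:def-quantum compl}: first, the involution $S = S^{\mathrm{cc}}$, which holds on $\CQ(X)$ by definition of q-subset; second, the order-reversal $T \subseteq S \Rightarrow S^{\mathrm{c}} \subseteq T^{\mathrm{c}}$, which is immediate from the quantifier ``for any $z \in D$'' in the definition of $D^{\mathrm{c}}$ (enlarging $D$ shrinks $D^{\mathrm{c}}$); and third, the complementation law $S \cap S^{\mathrm{c}} = \emptyset$, which follows because any $y \in S \cap S^{\mathrm{c}}$ would satisfy $y \neq_\mq y$, contradicting that $\neq_\mq$ is a subrelation of $\neq$ (irreflexivity). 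Note I do \emph{not} need $S \vee S^{\mathrm{c}} = X$ as a separate axiom, since the definition of orthocomplementation in the excerpt only lists $p = (p')'$, $0 = p \wedge p'$, and order-reversal.

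For part (b), I would prove the chain of equivalences. The orthomodular law for $\CQ(X)$ reads: for $S \subseteq T$ in $\CQ(X)$, $T = S \vee (T \cap S^{\mathrm{c}})$. I would first show this is equivalent to the stated cancellation property (if $S \subseteq T$ and $T \cap S^{\mathrm{c}} = \emptyset$ then $S = T$): in one direction, $T \cap S^{\mathrm{c}} = \emptyset$ forces $T = S \vee \emptyset = S$; the converse requires showing the cancellation property implies the full orthomodular law, which is the standard lattice-theoretic argument (apply cancellation to the pair $S \vee (T \cap S^{\mathrm{c}}) \subseteq T$ after checking the relevant meet vanishes). Then I would tie this to heredity via the characterization \eqref{eqt:quan-comp-in-quan-subset} of $\CQ(C)$: heredity says the induced closure on $T$ agrees with the ambient one, and I expect the main obstacle of the whole lemma to lie here, in translating the set-theoretic heredity condition into the lattice cancellation statement. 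The key computation is to unwind \eqref{eqt:quan-comp-in-quan-subset} for a q-subset $T$ and relate $(D^{\mathrm{c}} \cap T)^{\mathrm{c}} \cap T$ to the ambient double-complement, showing the two closure operators coincide on $\CP(T)$ precisely when the cancellation property holds.

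For parts (c) and (d) the arguments are more routine. For (c), atomicity gives $\{x\}^{\mathrm{cc}} = \{x\}$, so every singleton is a q-subset and is clearly minimal in $\CQ(X) \setminus \{\emptyset\}$; conversely any q-subset $S \neq \emptyset$ contains some $x$, hence $\{x\} \subseteq S$, identifying the atoms as exactly the singletons. Atomisticity then follows by showing $S = \bigvee\{\{x\} : x \in S\} = (\bigcup_{x \in S}\{x\})^{\mathrm{cc}} = S^{\mathrm{cc}} = S$. For (d), a strict quantum bijection $\Phi$ satisfies $\Phi(\{x\}^{\mathrm{c}}) = \{\Phi(x)\}^{\mathrm{c}}$ by definition; I would extend this to $\Phi(D^{\mathrm{c}}) = \Phi(D)^{\mathrm{c}}$ for all $D$ by writing $D^{\mathrm{c}} = \bigcap_{z \in D} \{z\}^{\mathrm{c}}$ and using that $\Phi$ is a bijection commuting with intersections, whence $\Phi$ maps q-subsets to q-subsets and intertwines $^{\mathrm{c}}$, $\cap$, and therefore $\vee$; the converse direction follows by restricting an ortholattice isomorphism to atoms, which are the singletons.
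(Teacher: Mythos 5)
Parts (a)--(c) and the forward half of part (d) of your proposal are correct and essentially follow the paper's own route: the paper likewise gets completeness of $\CQ(X)$ from the closure-operator formalism and checks the orthocomplementation axioms directly for (a); for (b) it identifies heredity with the identity $S = T\cap (T\cap S^\mc)^\mc$ for all $S\subseteq T$ in $\CQ(X)$, which upon taking q-complements is literally the orthomodular law $S^\mc = T^\mc \vee (S^\mc\cap T)$, and then simply cites Kalmbach for the equivalence with the cancellation property that you prove by hand (your cancellation argument, applying the hypothesis to $S\vee(T\cap S^\mc)\subseteq T$, is the standard one and is fine). Part (c) matches as well.

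The genuine gap is in the backward implication of part (d). You propose to recover the strict quantum bijection by ``restricting an ortholattice isomorphism to atoms, which are the singletons''. This presupposes that the atoms of $\CQ(X)$ are exactly the singleton sets, which is precisely the atomic situation of part (c); but part (d) assumes no atomicity, a point the paper stresses in the remark immediately following the lemma. For a non-atomic quantum set the singletons are not elements of $\CQ(X)$ at all, and the atoms of $\CQ(X)$, when they exist, need not be singletons: for instance, for $(\KH^\star,\neq_\mo)$ with $\KH$ a Hilbert space, the q-subsets are the closed subspaces with $0$ removed, the atoms are punctured lines, and no singleton is a q-subset. So your restriction argument has nothing to restrict to, and it fails in exactly the generality the statement requires. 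The repair is the paper's argument: for any $x\in X$ one has $\{x\}^\mc = \{x\}^{\mc\mc\mc}$ with both $\{x\}^{\mc\mc}$ and $\{x\}^\mc$ in $\CQ(X)$, so if $\Phi$ induces an ortholattice isomorphism then
$$\Phi(\{x\}^\mc) = \Phi\big((\{x\}^{\mc\mc})^\mc\big) = \Phi(\{x\}^{\mc\mc})^\mc \subseteq \{\Phi(x)\}^\mc,$$
the last inclusion holding because $\Phi(x)\in \Phi(\{x\}^{\mc\mc})$. Hence $z\neq_\mq x$ implies $\Phi(z)\neq_\mq \Phi(x)$, and applying the same reasoning to $\Phi^{-1}$ shows that $\Phi$ is a strict quantum bijection.
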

\begin{proof}
(a) It is easy to check that $^\mc$ is an orthocomplementation on the complete lattice $\CQ(X)$. 

\smnoind
(b) It follows from the definition and part (a) above that $\CQ(T)\subseteq \CQ(X)$ whenever $T\in \CQ(X)$. 
Hence, $\neq_\mq$ is hereditary if and only if for each $T\in \CQ(X)$, one has 
$$\{S\in \CQ(X): S\subseteq T \} \subseteq \CQ(T),$$ 
which is the same as $S = T\cap (T \cap S^\mc)^\mc$ for every $S\in \CQ(X)$ with $S\subseteq T$. 
This later condition is equivalent to 
$$S^\mc = T^\mc \vee (S^\mc\cap T), \ \text{for any $S,T\in \CQ(X)$ with $S\subseteq T$};$$ 
which is precisely the orthomodular law for $\CQ(X)$. 
This means that $\neq_\mq$ is hereditary if and only if $\CQ(X)$ is an orthomodular lattice. 
On the other hand, it is well-known that $\CQ(X)$ is an orthomodular lattice if and only if for every $S, T\in \CQ(X)$ with $S\subseteq T$ and $T\cap S^\mc = \emptyset$, one has $S = T$ (see, e.g., Theorem 2 of \cite[\S 1.3]{Kal83}).

\smnoind
(c) This part is easy to verify. 

\smnoind
(d) The forward implication is obvious (note that $\vee$ is defined through the q-complement). 
For the backward implication, we note that 
if $x\in X$, then $\Phi(\{x\}^\mc) = \Phi(\{x\}^{\mc\mc\mc}) = \Phi(\{x\}^{\mc\mc})^\mc \subseteq \{\Phi(x)\}^\mc$. 
This means that for any $z\in X$ with $z\neq_\mq x$, one has $\Phi(z) \neq_\mq \Phi(x)$. 
By considering $\Phi^{-1}$, we know that $\Phi$ is a strict quantum bijection.  
\end{proof}

\medskip

Note that in part (d) above, we do not assume any one of the two quantum sets to be atomic.



\medskip

\begin{eg}
Let $\KH$ be a Hilbert space. 
For any $\xi,\eta\in \KH^\star := \KH\setminus \{0\}$, we set $\xi \neq_\mo \eta$ if $\xi$ is orthogonal to $\eta$. 
Then $E\in \CQ(\KH^\star, \neq_\mo)$ if and only if $E \cup \{0\}$ is a closed subspace of $\KH$.
Thus, $\neq_\mo$ is not atomic.  
On the other hand, as $\CQ(\KH^\star, \neq_\mo)$ is isomorphic to the atomic orthomodular lattice of projections in $\CB(\KH)$, we know that $\neq_\mo$ is hereditary. 
This example also tells us that the converse of Lemma \ref{lem:lattice}(c) does not hold. 
\end{eg}

\medskip

Actually, a Hilbert space can be regarded as a Banach space $E$ with $E\setminus\{0\}$ being equipped with a compatible q-distinctness relation.

\medskip

\begin{eg}\label{eg:uncert}
For $u,v\in \BR$, we define 
$$u\neq_\mathrm{uc} v\quad \text{when} \quad |u-v| \geq 1.$$ 
If $u,v \in \BR$ with $u-v \geq 2$,  then $\{u,v\}^\mc = (-\infty, v-1]\cup [v+1, u-1] \cup [u+1, \infty)$ and 
$$\{u,v\}^{\mc\mc} = \{u,v\}.$$
On the other hand, if $S\subseteq \BR$ with $|x - y| < 2$ ($x,y\in S$), then  
\begin{equation}\label{eqt:n-points-dist>2delta}
S^\mc = (-\infty, x_0 -1] \cup [y_0+1, \infty) \quad \text{and} \quad S^{\mc\mc} = [x_0,y_0], 
\end{equation}
where $x_0:= \inf S$ and $y_0:=\sup S$. 
In particular, $\{u\}^{\mc\mc} = \{u\}$ and we know that $\neq_\mathrm{uc}$ is atomic.

We denote by $\CQ_1\subseteq \CP(\BR)$ the collection of subsets that are at most countable unions of  disjoint closed intervals (whose lengths could be zero, finite or infinite) such that any two elements in two distinct intervals is of distances bigger than or equal to $2$. 
For any $T\in \CQ_1$, it is not hard to check that $T^{\mc\mc} = T$.
This implies that $\CQ_1\subseteq \CQ(\BR, \neq_\mathrm{uc})$. 

Suppose that $S = \bigcup_{k\in \BZ} [x_k,y_k]$ with $y_k < x_{k+1}$. 
By grouping together those intervals with $x_{k+1} - y_k < 2$, we see that $S^\mc \in \CQ_1$ and hence $S^{\mc\mc}\in \CQ_1$. 

Let $D\subseteq \BR$. 
For any $n\in \BZ$, we set $D_n:= D\cap (2n, 2n+2)$. 
Observe that 
$$D^{\mc\mc} \supseteq {\bigcup}_{n\in \BZ} D_n^{\mc\mc} \cup (D\cap 2\BZ)  \supseteq D.$$ 
It follows from Relation \eqref{eqt:n-points-dist>2delta} that ${\bigcup}_{n\in \BZ} D_n^{\mc\mc} \cup (D\cap 2\BZ)$ is of the form $\bigcup_{k\in \BZ} [x_k,y_k]$ with $y_k < x_{k+1}$.
Hence, $\big({\bigcup}_{n\in \BZ} D_n^{\mc\mc} \cup (D\cap 2\BZ)\big)^{\mc\mc}$ belongs to $\CQ_1$, and thus, $D^{\mc\mc} =\big({\bigcup}_{n\in \BZ} D_n^{\mc\mc} \cup (D\cap 2\BZ)\big)^{\mc\mc}\in \CQ_1$. 
Consequently, $\CQ_1 = \CQ(\BR, \neq_\mathrm{uc})$. 

Consider $T=[0,1]$ and $S=[0,1/2]$. 
Then $S\subseteq T$ and $S,T\in \CQ_1$, but $S\notin \CQ(T)$ (as $S^\mc\cap T = \emptyset$). 
This means that $\neq_\mathrm{uc}$ is not hereditary, and hence, the complete atomistic ortholattice $\CQ(\BR, \neq_\mathrm{uc})$ is not orthomodular. 

Notice that if we replace $|u-v| \geq 1$ with $|u-v| \geq \delta$ for some fixed $\delta > 0$, then the resulting ortholattice will be isomorphic to the above. 
However, if we consider the q-distinctness relation induced by $|u-v| > 0$, then the resulting ortholattice is the classical Boolean algebra $\CP(\BR)$. 
\end{eg}

\medskip

Now, we will give the close connection between ortholattices and quantum sets, which shows that quantum sets can be viewed as the underlying ``set theory'' for quantum logic. 
The mathematics behinds these statements could be considered as known. 
Actually, the fact that the map $\Xi^0_\CL$ in part (a) is an ortholattice isomorphism for a complete ortholattice $\CL$ is a disguised form of \cite[Corollary 5.4]{Walker}, and part (b) is similar to the bijective correspondence between Hilbert lattices and Hilbert geometries (see \cite[Theorem 5.16]{SV07}). 
However, since these statements were not explicitly stated in the literature and the proof of part (b) is needed later on, we will give some words for their arguments here. 

\medskip

\begin{prop}\label{prop:atom-hered-q-distinct}
Let $\CL$ be an ortholattice, and $\CL^\star := \CL\setminus \{0\}$. 
For $p,q\in \CL^\star$, we set $p\neq_\CL q$ if $p\leq q'$. 

\smnoind
(a) Then $\neq_\CL$ is a q-distinctness relation on $\CL^\star$, and $\CL\mapsto (\CL^\star, \neq_\CL)$ is an injective correspondence from the collection of ortholattices to that of quantum sets. 
Moreover, the assignment 
$$p\mapsto \Xi^0_\CL(p):= \big\{q\in \CL^\star: q\leq p \big\}$$
is an injective ortholattice homomorphism from $\CL$ to $\CQ(\CL^\star, \neq_\CL)$. 
If, in addition, $\CL$ is complete, then the map $\Xi^0_\CL$ is a surjection (and in this case, $\CL$ is orthomodular if and only if $\neq_\CL$ is hereditary). 

\smnoind
(b) Suppose that $\CL$ is atomistic with $\CL^\mm\subseteq \CL^\star$ being the set of all atoms. 
The assignment
$$p\mapsto \Xi_\CL(p):= \{a\in \CL^\mm: a\leq p \}$$
is an injective ortholattice homomorphism from $\CL$ to $\CQ(\CL^\mm, \neq_\CL)$. 
Moreover, $\CL \mapsto (\CL^\mm, \neq_\CL)$ produces a bijective correspondence between the collections of complete atomistic ortholattices and atomic quantum sets (respectively, complete atomic orthomodular  lattices and atomic hereditary quantum sets). 
\end{prop}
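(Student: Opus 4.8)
The plan is to build everything on one computation. For $p\in\CL^\star$ I would show that the q-complement of the down-set $\Xi^0_\CL(p)=\{q\in\CL^\star:q\leq p\}$ is again a down-set:
\[
\Xi^0_\CL(p)^\mc=\{y\in\CL^\star: y\leq q'\text{ for all }q\in\Xi^0_\CL(p)\}=\{y\in\CL^\star:y\leq p'\}=\Xi^0_\CL(p'),
\]
the two containments in the middle coming respectively from $q\leq p\Rightarrow p'\leq q'$ and from specializing to $q=p$. This single identity does most of the work: it gives $\Xi^0_\CL(p)^{\mc\mc}=\Xi^0_\CL(p'')=\Xi^0_\CL(p)$, so $\Xi^0_\CL$ lands in $\CQ(\CL^\star,\neq_\CL)$ and intertwines ${}'$ with ${}^\mc$. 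Preservation of meets and of order is immediate from $\Xi^0_\CL(p\wedge q)=\Xi^0_\CL(p)\cap\Xi^0_\CL(q)$; preservation of joins then follows from the De Morgan identity $S\vee T=(S^\mc\cap T^\mc)^\mc$ in $\CQ(\CL^\star)$ combined with $(p'\wedge q')'=p\vee q$ in $\CL$; and injectivity is clear since $p\in\Xi^0_\CL(p)$. To see that $\CL\mapsto(\CL^\star,\neq_\CL)$ is injective at the level of ortholattices, I would recover the order from the relation alone via $q\leq p\iff\{p\}^\mc\subseteq\{q\}^\mc$ (using $\{p\}^\mc=\Xi^0_\CL(p')$), and then recover ${}'$ and $0$ from this reconstructed order.

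For the completeness clause I would prove surjectivity of $\Xi^0_\CL$. Given $S=S^{\mc\mc}\in\CQ(\CL^\star)$, set $p:=\bigvee S$, which exists by completeness; the infinite De Morgan law yields $\bigwedge_{z\in S}z'=(\bigvee S)'=p'$, so $S^\mc=\Xi^0_\CL(p')$ and hence $S=S^{\mc\mc}=\Xi^0_\CL(p'')=\Xi^0_\CL(p)$. Thus $\Xi^0_\CL$ is an ortholattice isomorphism onto $\CQ(\CL^\star)$, and the equivalence ``$\CL$ orthomodular $\iff$ $\neq_\CL$ hereditary'' is obtained by transporting the criterion of Lemma \ref{lem:lattice}(b) across this isomorphism.

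Part (b) follows the same template with $\CL^\star$ replaced by the atom set $\CL^\mm$, the one genuinely new point being atomicity of $(\CL^\mm,\neq_\CL)$. Here atomisticity is used decisively: for an atom $a$, if $c\in\CL^\mm$ lies in $\{a\}^{\mc\mc}$ then $c\leq b'$ for every atom $b\leq a'$, so
\[
c\leq\bigwedge_{b\in\CL^\mm,\,b\leq a'}b'=\Big(\bigvee_{b\in\CL^\mm,\,b\leq a'}b\Big)'=(a')'=a,
\]
the middle join being $a'$ exactly because $\CL$ is atomistic; since $c,a$ are atoms this forces $c=a$, while $a\in\{a\}^{\mc\mc}$ is routine. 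That $\Xi_\CL$ is an injective ortholattice homomorphism, and a surjection when $\CL$ is complete, then repeats the arguments above verbatim, with $\bigvee\{a\in\CL^\mm:a\leq p\}=p$ supplied by atomisticity. To close the bijective correspondence I would run it backwards: starting from an atomic quantum set $(X,\neq_\mq)$, Lemma \ref{lem:lattice}(a),(c) makes $\CQ(X)$ a complete atomistic ortholattice whose atoms are precisely the singletons $\{x\}$, and $x\mapsto\{x\}$ is a strict quantum bijection onto $(\CQ(X)^\mm,\neq_{\CQ(X)})$ because $\{x\}\leq\{y\}'$ in $\CQ(X)$ says exactly $x\in\{y\}^\mc$. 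The orthomodular refinement then drops out by combining the Kalmbach remark (for complete orthomodular lattices, atomicity and atomisticity coincide) with Lemma \ref{lem:lattice}(b).

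The main obstacle is not a single hard idea but the repeated, careful use of the infinite De Morgan law together with atomisticity: these are exactly the places where completeness and the atomistic hypothesis are indispensable, and where the argument would fail for a merely atomic or non-complete ortholattice. I would also watch the boundary element $p=1$, where $p'=0\notin\CL^\star$ and hence $\{1\}^\mc=\emptyset$, and throughout keep in mind that the join in $\CQ$ is the q-union, so that every passage between $\vee$ and $\cap,{}^\mc$ must be mediated by De Morgan rather than by ordinary set-theoretic union.
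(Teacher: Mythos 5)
Your proposal is correct, and its skeleton matches the paper's: everything hinges on the identity $\Xi^0_\CL(p)^\mc=\Xi^0_\CL(p')$, membership of the images in $\CQ$, surjectivity under completeness, and Lemma \ref{lem:lattice} to pass back and forth between hereditary/atomic quantum sets and orthomodular/atomistic lattices. The differences are in how two steps are discharged. First, where the paper identifies $\Xi^0_\CL:\CL\to\CQ(\CL^\star)$ as a disguised MacNeille completion and cites that theory both for the homomorphism property and for surjectivity in the complete case, you prove both directly: meets via down-sets, joins via the De Morgan identity $S\vee T=(S^\mc\cap T^\mc)^\mc$ together with $(p'\wedge q')'=p\vee q$, and surjectivity via $S^\mc=\Xi^0_\CL\bigl((\bigvee S)'\bigr)$ using the infinite De Morgan law; this is self-contained and, in part (b), is cleaner than the paper's two-step argument with $u_0=\bigvee S$ and $v_0=\bigvee S^\mc$. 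Second, for injectivity of the correspondence $\CL\mapsto(\CL^\star,\neq_\CL)$, the paper transports a strict quantum bijection $\Phi:\CM^\star\to\CL^\star$ to the induced ortholattice isomorphism $\bar\Phi:\CQ(\CM^\star)\to\CQ(\CL^\star)$ and checks that $\bar\Phi(\Xi^0_{\CM}(p))=\Xi^0_{\CL}(\Phi(p')')$, so that $(\Xi^0_\CL)^{-1}\circ\bar\Phi\circ\Xi^0_\CM$ is a well-defined isomorphism; you instead observe that the order is first-order definable from the relation via $q\leq p\iff\{p\}^\mc\subseteq\{q\}^\mc$, and that $p'$ is recovered as $\max\{p\}^\mc$ for $p\neq 1$ (with $1'=0$ adjoined), so any strict quantum bijection is automatically an ortholattice isomorphism after adding $0$. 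Your definability argument is more elementary and avoids the detour through $\CQ$, at the cost of having to handle the boundary case $\{1\}^\mc=\emptyset$, which you correctly flag; the paper's route has the advantage of working uniformly without isolating maxima, and its computation $\bar\Phi(\Xi^0_{\CM}(p))=\Xi^0_{\CL}(\Phi(p')')$ is reused later in the paper (e.g.\ in the proof of Theorem \ref{thm:main}), which your variant would not directly supply.
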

\begin{proof}
(a) Obviously,  $\neq_\CL$ is a  q-distinctness relation on $\CL^\star$. 
Moreover, as $\Xi^0_{\CL}(p')= \Xi^0_{\CL}(p)^\mc$, one has $\Xi^0_{\CL}(p) = \Xi^0_{\CL}(p'')  = \Xi^0_{\CL}(p)^{\mc\mc} \in \CQ(\CL^\star)$. 
On the other hand, since for each $S \subseteq \CL^\star$, one has 
$$S^\mc = \{q\in \CL^\star: s\leq q', \text{ for any }s\in S \},$$ 
the embedding $\Xi^0_{\CL}: \CL\to \CQ(\CL^\star)$ is a disguised form of the MacNeille completion of $\CL$ (see e.g., \cite{MacN} or \cite[p.255]{Kal83}). 
Hence, $\Xi^0_{\CL}$ is an injective ortholattice homomorphism. 

In order to verify that the assignment $\CL\mapsto (\CL^\star, \neq_\CL)$ is an injective correspondence, we suppose that $\CM$ is another ortholattice and there is a strict quantum bijection $\Phi: (\CM^\star, \neq_{\CM}) \to (\CL^\star, \neq_\CL)$. 
Let $\bar \Phi: \CQ(\CM^\star)\to \CQ(\CL^\star)$ be the induced ortholattice isomorphism. 
Since $\Xi^0_{\CL}$ is an injective ortholattice homomorphism, if the assignment $p \mapsto (\Xi^0_{\CL})^{-1}\big(\bar\Phi\big(\Xi^0_{\CM}(p)\big) \big)$ is well-defined, then it will be an injective ortholattice homomorphism from $\CM$ to  $\CL$, and by symmetry, this homomorphism is bijective. 

To show $p \mapsto (\Xi^0_{\CL})^{-1}\big(\bar\Phi\big(\Xi^0_{\CM}(p)\big) \big)$ being well-defined, we observe that 
for every $u,p\in \CM$, one has $u\leq p$ if and only if $\Phi(u)\leq \Phi(p')'$, because $\Phi$ is a strict quantum bijection. 
In other words, $\bar \Phi(\Xi^0_{\CM}(p)) = \Xi^0_{\CL}(\Phi(p')')$ ($p\in \CM$). 
The bijectivity of $\Phi$ will then give the required relation  
$$\{\bar \Phi (\Xi^0_{\CM}(p)): p\in \CM^\star \} = \{\Xi^0_{\CL}(q): q\in \CL^\star \}.$$

Finally, when $\CL$ is complete, the map $\Xi^0_{\CL}$ is an ortholattice isomorphism because $\CQ(\CL^\star)$ is the  MacNeille completion of $\CL$.
The last statement follows from Lemma \ref{lem:lattice}(b). 

\smnoind
(b) In the following the q-complementation $^\mc$ is taken in $\CP(\CL^\mm)$. 
As in part (a), we have $\Xi_\CL(p)^\mc = \Xi_\CL(p')$, which gives $\Xi_\CL(p)\in \CQ(\CL^\mm)$. 
It is not hard to verify that $\Xi_\CL$ is an injective ortholattice homomorphism. 

Suppose, in addition, that the lattice $\CL$ is complete. 
Consider $S\in \CQ(\CL^\mm)$. 
Obviously, $v_0:= \bigvee S^\mc \leq a'$ for each $a\in  S$, which gives 
\begin{equation}\label{eqt:u-lequantum v-bot}
u_0:= \bigvee S\leq v_0'.
\end{equation}
On the other hand, it follows from $S\subseteq \Xi_\CL(u_0)$, $S^\mc \subseteq \Xi_\CL(v_0)$ and $S^{\mc\mc} = S$ that
$\Xi_\CL(v_0') \subseteq \Xi_\CL(u_0)$. 
Thus, by Relation \eqref{eqt:u-lequantum v-bot}, we have $S = \Xi_\CL(u_0)$. 
Hence, $\Xi_\CL$ is surjective.

The above, together with parts (a) and (c) of Lemma \ref{lem:lattice}, establishes that $\CL \mapsto (\CL^\star, \neq_\CL)$ is a bijective correspondence from the collection of complete atomistic ortholattices to that of atomic quantum sets. 
Moreover, the corresponding statement for orthomodular lattices follows from Lemma \ref{lem:lattice}(b). 
\end{proof}

\medskip

Note that the quantum set $\CL^\star$ in Proposition \ref{prop:atom-hered-q-distinct}(a) is never atomic, unless $\CL = \{0,1\}$ because $\{1\}^{\mc\mc} = \CL^\star$. 
In the case when $\CL\neq \{0,1\}$ and $\CL$ is atomic, the two distinct quantum sets $(\CL^\star, \neq_\CL)$ and $(\CL^\mm, \neq_\CL)$ give the same ortholattice $\CL$.

\medskip

Observe also that, without the completeness assumption, the quantum set $(\CL^\star, \neq_\CL)$ may not be hereditary even when $\CL$ is  orthomodular  (since the MacNeille completion of an orthomodular lattice need not be orthomodular; see e.g. \cite{Adams}).


\medskip

\medskip

\begin{eg}\label{eg:reg-open}
(a) Let $A$ be a $C^*$-algebra, and $A^\star := A\setminus \{0\}$. 
For $a,b\in A^\star$, we define 
$$a\neq_\mathrm{z} b\quad \text{if} \quad ab=0=ba \text{ and }a^*b = 0 =ba^*.$$
A subset of $A$ belongs to $\CQ(A^\star, \neq_\mathrm{z})$ if and only if it is an \emph{annihilator hereditary $C^*$-subalgebra} of $A$ (i.e., subalgebra of the form $\{a\in A: aB = \{0\} = Ba\}$ for a $C^*$-subalgebra $B\subseteq A$) minus the zero element. 
Hence, $\neq_\mathrm{z}$ is never atomic. 
On the other hand, since for an annihilator hereditary $C^*$-subalgebra $B$ of $A$, every annihilator hereditary $C^*$-subalgebra of $A$ that contained in $B$ is an 
annihilator hereditary $C^*$-subalgebra of $B$, we know that $\neq_\mathrm{z}$ is hereditary. 

When $A$ is commutative, $\CQ(A^\star, \neq_\mathrm{z})$ corresponds bijectively to the set of ``regular open subsets''  (i.e., interiors of closed subsets) of the Gelfand spectrum of $A$.

\smnoind
(b) Let $X$ be a compact Hausdorff space. 
It is well-known that the collection $\CR(X)$ of all regular open subsets of $X$ is a complete Boolean algebra with 
$$U\wedge V := U\cap V, \quad U\vee V := \mathrm{Int}(\overline{U}\cup \overline{V})\quad \text{and} \quad U':= X\setminus \overline{U}$$  
(see \cite[p.103-104]{Vlad}), and it coincides with the orthomodular lattice $\CQ(C(X)^\star, \neq_\mathrm{z})$ as in part (a).   
Since every ultrafilter on $\CR(X)$ converges to a unique point in $X$, one obtains a continuous open surjection from the Stone space of $\CR(X)$, denoted by $X^\varepsilon$, onto $X$. 
Note that this construction is different from the ``Stone space associated with $X$'' as introduced in \cite{Mihara}, where the collection of clopen subsets instead of regular open subsets were considered.  

It is not hard to check that $X^\varepsilon$ is the \emph{universal extremally disconnected space associated with  $X$}, in the sense that any continuous map from an extremally disconnected space $Y$ to $X$ can be lifted to a continuous map from $Y$ to $X^\varepsilon$.
Furthermore, it can be shown that $C(X^\varepsilon)$ is the regular monotone completion of the $C^*$-algebra $C(X)$, as introduced in \cite{Ham81}. 

Suppose that $Z$ is a compact Hausdorff space. 
If $\Phi$ is a strict quantum bijection from $\big(\CR(X)^\star, \neq_{\CR(X)}\big)$ onto $\big(\CR(Z)^\star, \neq_{\CR(Z)}\big)$, then one can use Proposition \ref{prop:atom-hered-q-distinct}(a) and the Stone representation theorem to obtain a homeomorphism  from $Z^\varepsilon$ onto $X^\varepsilon$ that induces $\Phi$. 
\end{eg}

\medskip

In order to define quantum topology (see Definition \ref{defn:quantum top}), we also need the notion of q-commutativity.

\medskip

\begin{defn}\label{defn:q-comm}
Let $\CL$ be an ortholattice and $p,q\in \CL$. 

\smnoind
(a) We say that $p$ \emph{q-commutes}  with $q$  if $p\wedge(p\wedge q)' \leq (q\wedge (p\wedge q)')'$.

\smnoind
(b) $p\in \CL$ is said to be \emph{q-central} if it q-commutes with all other elements in $\CL$. 
\end{defn}

\medskip

For a quantum set $(X, \neq_\mq)$, we may extend the notion of q-commutativity to general subsets: 
two subsets $C,D\in \CP(X)$ are said to be \emph{q-commutes} if 
$$C\cap (C\cap D)^\mc \subseteq (D\cap (C\cap D)^\mc)^\mc.$$
Notice that if either $C\subseteq D$ or $C\subseteq D^\mc$, then $C$ will q-commute with $D$. 
Hence, if $x\in X$, then the singleton subset $\{x\}$ q-commutes with $D$ if and only if either $x\in D$ or $x\in D^\mc$. 
This gives the following statements: 
\begin{itemize}
	\item a subset $D\subseteq X$ q-commutes with all singleton subsets of $X$ if and only if $D^\mc = X\setminus D$;
	
	\item for any $x,y\in X$, one has $\{x\}$ q-commutes with $\{y\}$ if and only if either $x=y$ or $x\neq_\mq y$. 
\end{itemize}
The second statement above, together with Proposition \ref{prop:atom-hered-q-distinct}(b),  tells us that 
a complete atomistic ortholattice is a Boolean algebra if and only if all its atoms q-commute with one another.

\medskip

Let us also recall from \cite[p.20]{Kal83} that $p$ is said to \emph{commute with} $q$ in an ortholattice if 
$$p = (p\wedge q)\vee (p\wedge q').$$ 
Note that this commutativity relation is, in general, asymmetric. 
It is natural to ask whether there is any relation between this commutativity and the q-commutativity above. 

\medskip

\begin{prop}\label{prop:cp-comm}
Let $\CL$ be an ortholattice. 
Then $\CL$ is orthomodular if and only if the q-commutativity relation coincides with the commutativity relation on $\CL$. 
\end{prop}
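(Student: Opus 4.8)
The plan is to prove both implications after first rewriting the two relations in a common, transparent form. Fix $p,q\in\CL$ and set $m:=p\wedge q$, $p_1:=p\wedge m'$ and $q_1:=q\wedge m'$. Since $p\wedge q\le p$ and $p\wedge q'\le p$, one always has $(p\wedge q)\vee(p\wedge q')\le p$, so the commutativity identity $p=(p\wedge q)\vee(p\wedge q')$ is equivalent to the single inequality $p\le m\vee(p\wedge q')$. On the other hand, because $a\le b'\iff b\le a'$ in any ortholattice, the q-commutativity condition $p\wedge(p\wedge q)'\le(q\wedge(p\wedge q)')'$ is exactly $p_1\le q_1'$, which is visibly symmetric in $p$ and $q$. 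Finally $p_1\le m'$ gives $m\le p_1'$, and likewise $m\le q_1'$.

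For the forward direction I would assume $\CL$ orthomodular and check the two relations agree on every ordered pair. First, commutativity implies q-commutativity: if $p$ commutes with $q$, then by the commutation theory of orthomodular lattices (see \cite[\S1.3]{Kal83}) $p$ and $q$ generate a Boolean sub-ortholattice $\CB$, inside which the operations defining $p_1,q_1$ are computed; distributivity there gives $p_1=p\wedge q'$, $q_1=q\wedge p'$ and hence $p_1\wedge q_1=0$, which in a Boolean algebra forces $p_1\le q_1'$. Conversely, assume $p_1\le q_1'$; together with $m\le p_1'$ and $m\le q_1'$ this says $m,p_1,q_1$ are pairwise orthogonal, so they generate a Boolean sub-ortholattice $\CB$ (orthogonal elements commute). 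The orthomodular law applied to $m\le p$ and $m\le q$ yields $p=m\vee p_1$ and $q=m\vee q_1$, whence $p,q\in\CB$; as every pair in a Boolean algebra commutes, $p$ commutes with $q$. This gives coincidence of the two relations when $\CL$ is orthomodular.

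For the converse I would argue by contraposition. If $\CL$ is not orthomodular, there exist $a\le b$ with $b\neq a\vee(b\wedge a')$ (and necessarily $a\neq b$). Taking the ordered pair $(b,a)$, commutativity of $b$ with $a$ would read $b=(b\wedge a)\vee(b\wedge a')=a\vee(b\wedge a')$, which fails by hypothesis; yet, since $b\wedge a=a$, one has $b\wedge(b\wedge a)'=b\wedge a'\le 1=(a\wedge a')'=(a\wedge(b\wedge a)')'$, so $b$ does q-commute with $a$. Hence the q-commutativity and commutativity relations differ, completing the contrapositive.

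The routine parts are the rewriting in the first paragraph and the explicit witness in the last; the only substantive input is the commutation theory invoked in the forward direction, namely that commuting pairs and, dually, pairwise orthogonal families generate Boolean sub-ortholattices. The main obstacle is therefore to marshal these facts correctly — in particular to justify that the orthocomplements and the inequality $p_1\le q_1'$ may be read off inside the generated Boolean subalgebra — rather than any delicate new estimate. Should a self-contained account be preferred, the same conclusions follow from the decompositions $p=m\vee p_1$ and $q=m\vee q_1$ together with the Foulis--Holland distributivity laws applied to the pairwise orthogonal triple $m,p_1,q_1$.
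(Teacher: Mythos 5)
Your proof is correct, but it routes both directions differently from the paper. In the forward direction, your argument that commutativity implies q-commutativity (commuting pair generates a distributive sub-ortholattice, then compute $p_1=p\wedge q'$, $q_1=q\wedge p'$) is essentially the paper's; however, for the converse inclusion (q-commutativity implies commutativity) the paper uses a short direct computation: applying the orthomodular law to $p\wedge q\leq q$ gives $q=(p\wedge q)\vee(q\wedge(p\wedge q)')$, whence $p\wedge(p\wedge q)'\leq q'$, and then a second application of the orthomodular law to $p\wedge q\leq p$ yields $(p\wedge q)\vee(p\wedge q')\geq p$. You instead decompose $p=m\vee p_1$, $q=m\vee q_1$ and invoke the fact that the pairwise orthogonal triple $m,p_1,q_1$ generates a Boolean sub-ortholattice --- heavier machinery (Foulis--Holland/Boolean generation) but arguably more conceptual; both are sound, and your reading of $p_1\le q_1'$ inside the generated subalgebra is legitimate because a sub-ortholattice carries the restricted order and orthocomplementation. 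The larger divergence is in the backward direction: the paper simply notes that q-commutativity is symmetric and cites Kalmbach's theorem that symmetry of the commutativity relation forces orthomodularity, whereas you argue by contraposition with an explicit witness --- a failure $a\leq b$, $b\neq a\vee(b\wedge a')$ of the orthomodular law gives a pair for which q-commutativity holds trivially (the right-hand side is $(a\wedge a')'=1$) while commutativity fails. Your version is self-contained and in effect reproves the cited Kalmbach result in the special case needed; the paper's is shorter at the cost of an external citation.
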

\begin{proof}
$\Rightarrow)$. 
Consider $p,q\in \CL$. 
Suppose that $p$ q-commutes with $q$. 
As $\CL$ is orthomodular, we have $q = (p\wedge q)\vee (q\wedge (p\wedge q)')$. 
From this, we know that
	$$q' = (p\wedge q)' \wedge (q\wedge (p\wedge q)')' \geq (p\wedge q)' \wedge (p\wedge (p\wedge q)')  = p\wedge (p\wedge q)'.$$
Thus, $(p\wedge q)\vee (p\wedge q') \geq (p\wedge q)\vee (p\wedge (p\wedge q)') = p$ (again by the orthomodular law). 

Conversely, suppose that $p$ commutes with $q$. 
Then, by Lemma 1 of \cite[\S 1.4]{Kal83}, the sub-ortholattice generated by $p$ and $q$ is distributive. 
Consequently, 
$$p\wedge (p\wedge q)' = p \wedge (p' \vee q') = (p\wedge p')\vee (p\wedge q') = p\wedge q'.$$
Similarly, $q\wedge (p\wedge q)' = q \wedge p'$ and hence, $p\wedge (p\wedge q)' \leq (q\wedge (p\wedge q)')'$. 

\noindent
$\Leftarrow)$. 
Note that the q-commutativity relation is symmetric. 
However, it was shown in Theorem 2 of \cite[\S 1.3]{Kal83} that if the commutative relation is symmetric, then $\CL$ is orthomodular. 
\end{proof}

\medskip

\begin{cor}\label{cor:q-cent-hered}
Let $(X, \neq_\mq)$ be a hereditary quantum set. 

\smnoind
(a) If two elements $S$ and $T$ in $\CQ(X)$ q-commute, then the sub-ortholattice generated by $S$ and $T$ is distributive and, in particular,  
$S$ and $T^\mc$ q-commute. 

\smnoind
(b) $S\in \CQ(X)$ is q-central if and only if $X\setminus (S\cup S^\mc)$ contains no non-empty element in $\CQ(X)$. 

\smnoind
(c) Suppose, in addition, that $\neq_\mq$ is atomic. 
Then a subset $D\subseteq X$ belongs to $\CQ(X)$ and is q-central if and only if $D^\mc = X \setminus D$. 
\end{cor}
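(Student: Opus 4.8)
The unifying idea of my plan is to push everything into the orthomodular lattice $\CQ(X)$ and exploit the dictionary between q-commutativity and Kalmbach's commutativity. Since $(X,\neq_\mq)$ is hereditary, Lemma \ref{lem:lattice}(b) makes $\CQ(X)$ an orthomodular lattice; and for $S,T\in\CQ(X)$ the subset q-commutativity $S\cap(S\cap T)^\mc\subseteq(T\cap(S\cap T)^\mc)^\mc$ is literally the ortholattice q-commutativity of Definition \ref{defn:q-comm}(a), because $\wedge=\cap$ and $'={}^\mc$ on $\CQ(X)$. Hence Proposition \ref{prop:cp-comm} lets me replace ``q-commute'' by ``commute'' throughout; moreover, since q-commutativity is symmetric and, in this orthomodular setting, coincides with commutativity, commutativity is symmetric here as well.

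For part (a), I would first use Proposition \ref{prop:cp-comm} to turn the hypothesis that $S$ q-commutes with $T$ into the statement that $S$ commutes with $T$. Then, exactly as in the converse direction of the proof of Proposition \ref{prop:cp-comm}, Lemma 1 of \cite[\S 1.4]{Kal83} shows that the sub-ortholattice generated by $S$ and $T$ is distributive. For the ``in particular'' clause I would observe that $T^\mc$ lies in this (distributive, hence Boolean) sub-ortholattice, where $(S\wedge T^\mc)\vee(S\wedge T)=S\wedge(T^\mc\vee T)=S$, so that $S$ commutes with $T^\mc$; translating back through Proposition \ref{prop:cp-comm} gives that $S$ q-commutes with $T^\mc$.

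For part (b), q-centrality of $S$ means $S$ q-commutes with every $T\in\CQ(X)$, which by the remarks above is the same as every $T$ commuting with $S$, i.e.\ $T=(T\cap S)\vee(T\cap S^\mc)$. In the forward direction I would argue by contradiction: a non-empty $T\in\CQ(X)$ with $T\subseteq X\setminus(S\cup S^\mc)$ satisfies $T\cap S=\emptyset=T\cap S^\mc$, forcing $T=(T\cap S)\vee(T\cap S^\mc)=\emptyset$. For the converse, given any $T\in\CQ(X)$ I set $R:=(T\cap S)\vee(T\cap S^\mc)\leq T$, use the orthomodular law to write $T=R\vee(T\cap R^\mc)$, and put $W:=T\cap R^\mc$. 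Since $W\subseteq T$ and $W\subseteq R^\mc\subseteq(T\cap S)^\mc$, the identity $A\cap A^\mc=\emptyset$ for $A\in\CQ(X)$ forces $W\cap S=\emptyset$, and symmetrically $W\cap S^\mc=\emptyset$; thus $W\subseteq X\setminus(S\cup S^\mc)$ is a q-subset, hence empty by hypothesis, whence $T=R$ and $S$ q-commutes with $T$.

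For part (c), atomicity together with Lemma \ref{lem:lattice}(c) makes the singletons exactly the atoms, all lying in $\CQ(X)$. The forward direction is then immediate from the bullet points preceding Definition \ref{defn:q-comm}: a q-central $D$ q-commutes with every singleton, which is equivalent to $D^\mc=X\setminus D$. For the converse I would first check $D\in\CQ(X)$ by showing $D^{\mc\mc}=D$: any $y\in D^{\mc\mc}\setminus D$ would lie in $X\setminus D=D^\mc$, hence in $D^\mc\cap(D^\mc)^\mc=\emptyset$, a contradiction; then $D^\mc=X\setminus D$ gives $D\cup D^\mc=X$, so $X\setminus(D\cup D^\mc)=\emptyset$ and part (b) yields q-centrality. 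The step I expect to be the main obstacle is the converse of part (b): verifying that $W=T\cap R^\mc$ genuinely sits inside the ``gap'' $X\setminus(S\cup S^\mc)$, which requires combining the orthomodular law (to reduce $T=R$ to $W=\emptyset$) with the elementary fact $A\cap A^\mc=\emptyset$ for q-subsets; once this is in place, parts (a) and (c) are essentially bookkeeping built on Proposition \ref{prop:cp-comm} and the bullet-point computations.
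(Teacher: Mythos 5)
Your proposal is correct and follows essentially the same route as the paper: hereditariness plus Lemma \ref{lem:lattice}(b) makes $\CQ(X)$ orthomodular, Proposition \ref{prop:cp-comm} translates q-commutativity into commutativity, the converse of (b) is handled by the same ``gap'' element ($T\cap\big((T\cap S)\vee(T\cap S^\mc)\big)^\mc$, which you call $W$) together with the orthomodular law, and (c) reduces to atomicity. The only differences are cosmetic: you spell out the distributivity computation for the ``in particular'' clause of (a) and route the forward direction of (c) through the singleton bullet points rather than through part (b), both of which are faithful elaborations of what the paper leaves implicit.
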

\begin{proof}
(a) This follows from Proposition \ref{prop:cp-comm} and Lemma 1 of \cite[\S 1.4]{Kal83} (see also Lemma \ref{lem:lattice}(b)).

\smnoind
(b) Lemma \ref{lem:lattice}(b) implies that $\CQ(X)$ is an orthomodular lattice. 
Suppose that $S$ is q-central. 
Consider $T\in \CQ(X)$ with $T\subseteq (X\setminus S)\cap (X\setminus S^\mc)$. 
As $T$ commutes with $S$ (by Proposition \ref{prop:cp-comm}), we see that 
$T  = (T\cap S)\vee (T\cap S^\mc)$, but $T\cap S = T\cap S^\mc = \emptyset$. 
Conversely, suppose that the only element in $\CQ(X)$ contained in $X\setminus (S\cup S^\mc)$ is the empty set. 
Consider $T\in \CQ(X)$. 
Obviously, $T \supseteq (T\cap S)\vee (T\cap S^\mc)$.
We set 
$$R:= T \cap ((T\cap S)\vee (T\cap S^\mc))^\mc.$$ 
Then $R\cap S = T\cap S \cap (T\cap S)^\mc \cap (T\cap S^\mc)^\mc = \emptyset$, and similarly, we have $R\cap S^\mc = \emptyset$. 
Thus, $R \subseteq (X\setminus S)\cap (X\setminus S^\mc)$ and the hypothesis implies that $R =\emptyset$. 
As $\CQ(X)$ is orthomodular, we see that $T = (T\cap S)\vee (T\cap S^\mc)$; i.e., $T$ commutes with $S$. 
The conclusion now follows from Proposition \ref{prop:cp-comm}. 

\smnoind
(c) If $D\in \CQ(X)$ and is q-central, then part (b) and the atomic assumption give $D^\mc = X \setminus D$. 
Conversely, suppose that $D^\mc = X \setminus D$. 
Then it is easy to see that $D\in \CQ(X)$, and we conclude from part (b) that $D$ is q-central. 
\end{proof}

\medskip

\begin{eg}\label{eg:tran-prob}
	Let $\BR\BP^1$ be the one dimensional projective space.
	The absolute value of the usual inner product on $\BR^2$ induces a function $\tau: \BR\BP^1\times \BR\BP^1 \to [0,1]$ known as the transition probability. 
	We define
	$$x \neq_\mathrm{ht}y \quad \text{when} \quad \tau(x,y) \leq 1/2.$$  
	Obviously, $\neq_\mathrm{ht}$ is a q-distinctness relation. 
	We identify $\BR\BP^1$  with the interval $[0,3)$ as sets  (i.e., not homeomorphic) through the assignment that sends $\theta$ to the image of $\mathrm{e}^{\theta\pi\mathrm{i}/3}$ in $\BR\BP^1$. 
	Define $\kappa: [0,6) \to [0,3)$ to be the map which coincides with the identity map on $[0,3)$ and $\kappa (\theta) := \theta - 3$ when $\theta \geq 3$. 
	Then 
	$$\{\theta\}^\mc = \kappa\big([\theta+ 1, \theta+ 2]\big) \quad \text{and} \quad \{\theta\}^{\mc\mc} = \{\theta\} \qquad (\theta\in [0,3)).$$ 
	This means that $\neq_\mathrm{ht}$ is atomic. 
	If $\theta_1, \theta_2\in [0,6)$ with $\theta_2 - \theta_1 \in [0,3)$, then we say that  $\kappa\big([\theta_1, \theta_2]\big)$ is an \emph{arc of length $\theta_2 - \theta_1$} (note that an arc of length zero is a singleton subset). 
	
	Pick  an arbitrary non-empty subset $D\subseteq \BR\BP^1$. 
	If $D$ is contained in an arc of length $1$, then $D^{\mc\mc}$ is the smallest arc containing $D$; otherwise, we have $D^{\mc\mc} = \BR\BP^1$. 
	Therefore, 
	$$\CQ\big(\BR\BP^1, \neq_\mathrm{ht}\!\!\big) = \big\{\emptyset, \BR\BP^1 \big\}\cup \big\{ S: S\subseteq \BR\BP^1 \text{ is an arc of length dominated by }1 \big\}. $$
	Consider $T = \kappa\big([0, 1]\big)$ and $S := \kappa\big([0, 1/2]\big)$. 
	Then $S^\mc \cap T = \emptyset$ and hence $(S^\mc \cap T)^\mc  \cap T = T\neq S$. 
	In other words, $\neq_\mathrm{ht}$ is not hereditary.
		
Suppose that $S, T\in  \CQ(\BR\BP^1, \neq_\mathrm{ht})\setminus \{\emptyset, \BR\BP^1 \}$. 
If $S$ is an arc of length strictly less than one, then $S$ commutes with $T$ if and only if either $S\subseteq T$ or $S\subseteq T^\mc$.
If $S$ is an arc of length one, then $S$ commutes with $T$ if and only if either $S = T$, $S = T^\mc$ or both $S\cap T$ and $S\cap T^\mc$ are singleton subsets. 
However,  $S$ q-commutes with $T$ if and only if either $S\cap T\neq \emptyset$ or $S\subseteq T^\mc$.  
In particular, $\emptyset$ and $\BR\BP^1$ are the only q-central elements in $\CQ(\BR\BP^1, \neq_\mathrm{ht})$. 
\end{eg}

\medskip

It is more interesting to look at the case when the q-distinctness relation is defined by elements having zero transition probability. 
This will be considered in Section \ref{sec:non-comm-Gelf} below, but in a much more general situation. 
As said in the Introduction, Section \ref{sec:non-comm-Gelf} contains a non-commutative version of the Gelfand theorem, and this theorem requires a version of Dye's theorem for $C^*$-algebras. 
Thus, we will consider this in the next section.

\medskip

\section{A Dye's theorem for $C^*$-algebras}

\medskip

In this section, we will present a Dye theorem for $C^*$-algebras concerning q-closed projections. 
Let us first recall the original Dye's theorem. 
We denote by $\CP_M$ the set of projections of a von Neumann algebra $M$. 
A linear map (respectively, linear bijection) $\Theta$ from a $C^*$-algebra $A$ to another $C^*$-algebra is a \emph{Jordan $^*$-homomorphism} (respectively, \emph{Jordan $^*$-isomorphism}) if $\Theta$ preserves the involution and the Jordan product; i.e., $\Theta(a^*) = \Theta(a)^*$ and $\Theta(ab + ba)= \Theta(a)\Theta(b) + \Theta(b)\Theta(a)$ ($a,b\in A$). 

\medskip

The original Dye's theorem (\cite{Dye}) can be rewritten in the following form. 

\medskip

%

\medskip

\begin{thm2}\label{thm-Dye}
	(Dye) Let $M$ and $N$ be two von Neumann algebras with $M$ not having a type $\mathrm{I}_2$ summand. 
	Any strict quantum bijection from $\CP_M\setminus \{0\}$ onto $\CP_N\setminus \{0\}$ (under the q-distinctness relations induced by orthogonality) extends uniquely to a Jordan $^*$-isomorphism from $M$ onto $N$. 
\end{thm2}

\medskip

There has been extensions of this theorem to the case of $JW$-algebras (see \cite{BW93}) and $AW^*$-algebras (see \cite{Ham}). 
However, we need an ``extension'' to $C^*$-algebra. 
\medskip

Before presenting our version of Dye's theorem for $C^*$-algebra that is required, let us first give some notation. 
We denote by $\widehat{A}$ the set of (unitary equivalence classes) of irreducible $^*$-representations of $A$. 
An element $p\in \CP_{A^{**}}$ is a \emph{closed projection} of the  $C^*$-algebra $A$ (where $A^{**}$ is the enveloping von Neumann algebra of $A$) if there is an increasing net $\{a_i \}_{i\in \KI}$ of positive contractive elements in $A$ such that $1-a_i$ weak-$^*$-converges to $p$ (see e.g. \cite[\S 3.11.10]{Ped79}). 
An element $q\in \CP_{A^{**}}$ is an \emph{open projection} of $A$ if $1-q$ is a closed projection of $A$; i.e., there exists an increasing net $\{a_i\}_{i\in \KI}$ in $A_+$ such that $a_i$ weak-$^*$-converges to $q$. 
We denote by $\CC_0(A)$ the set of all closed projections of $A$. 

\medskip


\medskip

As in the previous section, $\CP^\mm_{A^{**}}$ is the set of atoms of the complete orthomodular lattice $\CP_{A^{**}}$.
We denote by $A^{**}_\ba$ the weak-$^*$-closed linear span of $\CP^\mm_{A^{**}}$. 
Furthermore, we define $\bz_A:= \bigvee \CP_{A^{**}}^\mm\in \CP_{A^{**}}$ and consider the normal $^*$-homomorphism $\Lambda_A:A^{**} \to A^{**}_\ba$ given by 
\begin{equation}\label{eqt:def-Lambda}
\Lambda_A(x) := \bz_Ax \qquad (x\in A^{**}).
\end{equation}
Since the restriction of $\Lambda_A$ on the multiplier algebra $M(A)$ is injective (when $M(A)$ is considered as a subspace of $A^{**}$), by abuse of notation, we will consider $M(A)$ as a $C^*$-subalgebra of $A^{**}_\ba$. 

\medskip

The image of a closed projection (respectively, an open projection) under $\Lambda_A$ is  called a \emph{q-closed projection} (respectively, a  \emph{q-open projection}). 
The set of all q-closed projections of $A$ will be denoted by $\CC(A)$. 

\medskip

Before we start proving the main theorem (Theorem \ref{thm:main}) of this section, we first show that the corresponding statement of this theorem with ``q-closed projections'' being replaced by ``q-open projections''  is false (similarly, Corollary \ref{cor:closed-proj} fails when we consider open projections instead of closed projections).

\medskip

\begin{eg}\label{eg:counter}
	Note that $C([0,1])^{**}_\ba \cong \ell^\infty([0,1])$.
	A projection $p\in \ell^\infty([0,1])$ is a q-open projection of $C([0,1])$ if and only if $p$ is the indicator function of an open subset of $[0,1]$. 
	Consider $\CO$ to be the collection of all open subsets of $[0,1]$. 
	Define $\varphi: \CO \to \CO$ such that 
	$\varphi$ exchanges $[0,1)$ and $(0,1]$, but $\varphi$ fixes all the other open subsets.
	Then $\varphi$ induces a strict quantum bijection $\hat\varphi$ from the set of all non-zero q-open projections of $C([0,1])$ onto itself. 
	However,  $\hat \varphi$ cannot be the restriction of the bidual of a $^*$-automorphism of $C([0,1])$, because a $^*$-automorphism of $C([0,1])$ is given by a homeomorphism on $[0,1]$, yet $\varphi$ cannot be induced by a homeomorphism. 
\end{eg}

\medskip

More generally, let $X$ be a compact Hausdorff space.
If $U\subseteq X$ is an open dense subset, then there is no non-trivial open subset having empty intersection with $U$. 
Thus, as in the example above, if a map permutes open dense subsets of $X$ but keeps all the other open subsets fixed, then this map induces a strict quantum bijection between non-zero q-open projections but this strict quantum bijection does not come from a homeomorphism of $X$ to itself. 
One may wonder what will happen if the collection of regular open subsets is considered instead (in this case, those problematic open dense subsets are not included).  
In Example \ref{eg:reg-open}, we have seen that in this case, we do not get a homeomorphism from $X$ to itself, but a homeomorphism from the spectrum of the regular monotone completion of $C(X)$ to itself. 
Therefore, there seems to have no way to get a Dye's theorem for q-open projections of $C^*$-algebras.





\medskip

Let us begin our proof for the main theorem in the section by giving some basic results. 
Our first proposition in the section is an ``atomic version'' of \cite[Theorem 2.2]{APT73}. 
Although this fact could be regarded as known, we nevertheless give a clear presentation of it for completeness. 

\medskip

\begin{prop}\label{prop:atom-Thm2.2-APT}
	Let $A$ be a $C^*$-algebra and $x\in A^{**}_\ba$ be a self-adjoint element.
	Then $x\in M(A)$ if and only if all its spectral projections (in $A^{**}_\ba$) corresponding to closed subsets of $\BR$ being q-closed. 
\end{prop}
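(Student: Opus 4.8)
The plan is to reduce the statement to the classical Akemann--Pedersen--Tomiyama characterization of $M(A)_\sa$ inside $A^{**}$ (\cite[Theorem 2.2]{APT73}), transported across the normal $^*$-homomorphism $\Lambda_A$ by means of the order correspondence between closed projections of $A$ and q-closed projections. Throughout, for a self-adjoint element $y$ I write $\chi_F(y)$ for its spectral projection associated with a closed set $F\subseteq\BR$, computed in whichever von Neumann algebra $y$ lives in. Two facts underlie the reduction. First, since $\bz_A$ is central and $A^{**}_\ba=\bz_A A^{**}$, for any self-adjoint $m\in A^{**}$ and any closed $F$ one has, computing in the corner $A^{**}_\ba$, the identity $\chi_F(\Lambda_A(m))=\Lambda_A(\chi_F(m))$ (the functional calculus of $\bz_A m$ in the corner is $\bz_A$ times that of $m$). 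Second, $\Lambda_A$ restricts to an order isomorphism from the open projections of $A$ onto the q-open projections (this is the content of the correspondence of both with closed left ideals, cf.\ \cite{Ake69}), and hence, by complementation, to an order isomorphism from $\CC_0(A)$ onto $\CC(A)$; in particular $\Lambda_A$ is injective on closed projections.

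For the forward implication, suppose $x=\Lambda_A(m)$ with $m\in M(A)$; since $\Lambda_A$ is a $^*$-homomorphism injective on $M(A)$ and $x=x^*$, the unique preimage $m$ is self-adjoint. By \cite[Theorem 2.2]{APT73} together with the commuting spectral calculus (arbitrary meets of closed projections are closed, and finite joins of commuting closed projections are closed), every $\chi_F(m)$ with $F$ closed is a closed projection of $A$. Applying $\Lambda_A$ and the first fact gives $\chi_F(x)=\Lambda_A(\chi_F(m))\in\CC(A)$, so all closed-set spectral projections of $x$ are q-closed.

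For the converse, assume every $\chi_F(x)$ (with $F$ closed) is q-closed. Using that $\Lambda_A\colon\CC_0(A)\to\CC(A)$ is an order isomorphism, pull each $\chi_F(x)$ back to the unique closed projection $p_F$ of $A$ with $\Lambda_A(p_F)=\chi_F(x)$. The family $\{f_\lambda:=p_{(-\infty,\lambda]}\}_{\lambda\in\BR}$ is totally ordered (hence commuting), and, because $\Lambda_A$ is normal and implements an order isomorphism, it is monotone increasing, right-continuous, and runs from $0$ to $1$; thus it is a resolution of the identity and defines a self-adjoint $m:=\int\lambda\,df_\lambda\in A^{**}$ with $\Lambda_A(m)=x$. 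It remains to prove $m\in M(A)$, for which, by \cite[Theorem 2.2]{APT73}, it suffices to show that $\chi_F(m)$ is a closed projection for every closed $F$; I will in fact show $\chi_F(m)=p_F$, whence $x=\Lambda_A(m)\in M(A)$.

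The main obstacle is exactly this last verification. By construction $\chi_{(-\infty,\lambda]}(m)=f_\lambda=p_{(-\infty,\lambda]}$ is closed, but the complementary half-lines are delicate: $\chi_{[\lambda,\infty)}(m)=1-\sup_{\mu<\lambda}f_\mu$, and a supremum of closed projections need not be closed in the noncommutative setting, while $\Lambda_A$ is \emph{not} injective on arbitrary projections, so one cannot conclude merely from equality of $\Lambda_A$-images. I would resolve this by showing that the increasing supremum $\sup_{\mu<\lambda}p_{(-\infty,\mu]}$ equals the open projection corresponding, under the open/q-open order isomorphism, to the q-open projection $\chi_{(-\infty,\lambda)}(x)$; this identifies $\chi_{[\lambda,\infty)}(m)$ with the closed projection $p_{[\lambda,\infty)}$. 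The key input is the compatibility of the closed- and open-projection correspondences with complementation and with monotone limits, together with the fact that the order relation $p\le q$ between a closed projection $p$ and an open projection $q$ is detected by their $\Lambda_A$-images. Once both families of half-line spectral projections of $m$ are known to be closed, a general closed $F$ is handled by the commuting spectral calculus as in the forward direction: intervals are meets of two half-lines, finite unions of intervals are finite joins of commuting closed projections, and an arbitrary closed $F$ is a countable intersection of such sets, so $\chi_F(m)$ is closed. Then \cite[Theorem 2.2]{APT73} gives $m\in M(A)$, completing the proof.
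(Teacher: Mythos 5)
Your forward direction and the skeleton of your converse (pulling the half-line spectral projections back through the order bijection $\Lambda_A\colon \CC_0(A)\to\CC(A)$, right-continuity via normality plus injectivity of $\Lambda_A$ on \emph{closed} projections, and the reduction of general closed sets to half-lines by commuting meets/joins) are all sound. But the step you yourself flag is a genuine gap, not a routine verification: you need $\sup_{\mu<\lambda}p_{(-\infty,\mu]}$ to equal the open projection $\tilde q_\lambda$ whose image under $\Lambda_A$ is $\chi_{(-\infty,\lambda)}(x)$. The tools you cite give only one inequality: by \cite[Theorem II.17]{Ake69} each closed $p_{(-\infty,\mu]}$ satisfies $p_{(-\infty,\mu]}\leq \tilde q_\lambda$, hence $s:=\sup_{\mu<\lambda}p_{(-\infty,\mu]}\leq \tilde q_\lambda$, and by normality $\Lambda_A(s)=\Lambda_A(\tilde q_\lambda)$. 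To conclude $s=\tilde q_\lambda$ you need $\Lambda_A$ to separate $s$ from $\tilde q_\lambda$; but $s$ is an increasing supremum of closed projections, which is in general neither closed nor open, so injectivity of $\Lambda_A$ on $\CC_0(A)$ and on open projections says nothing about it. What remains is a projection $\tilde q_\lambda - s$ with zero atomic part, and no elementary fact rules out its being nonzero. (In the commutative case this is rescued by countable additivity of measures; there is no elementary noncommutative substitute, and ``compatibility of the correspondences with monotone limits'' is exactly the assertion in question, not an available lemma.)

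This missing point is precisely where the paper invests its technical effort: it works inside Pedersen's monotone sequentially closed Jordan algebra $\CU_0\subseteq A^{**}_\sa$ (\cite[Proposition 3.6]{Ped72}), which contains all open (hence, with the unit, all closed) projections and on which $\Lambda_A$ is isometric (\cite[Theorem 3.8]{Ped72}), together with Brown's theorem that $\CU_0$ is the self-adjoint part of a $C^*$-algebra (\cite[Theorem 4]{Brown14}). With that machinery your claim does become true: choosing $\mu_k\uparrow\lambda$, the supremum $\sup_k p_{(-\infty,\mu_k]}$ is a monotone sequential limit of elements of $\CU_0$, hence lies in $\CU_0$, as does the open projection $\tilde q_\lambda$; since they have the same $\Lambda_A$-image and $\Lambda_A$ is injective on $\CU_0$, they coincide. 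So your route can be completed, but only by importing the Pedersen--Brown apparatus that constitutes the heart of the paper's proof; as written, your proposal replaces that key idea with an appeal to an unproved principle.
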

\begin{proof}
	If $x\in M(A)_\sa$, then it follows from \cite[Theorem 2.2]{APT73} that all its spectral projections (in $A^{**}_\ba$) corresponding to closed subsets of $\BR$ are q-closed. 

	Conversely, suppose that such a property holds for $x\in (A^{**}_\ba)_\sa$. 
	Let $\CU_0\subseteq A^{**}_\sa$ be the monotone sequential closed (real) Jordan algebra as in \cite[Proposition 3.6]{Ped72}. 
	It was shown in \cite[Theorem 4]{Brown14} that $\CU_0$ is the self-adjoint part of a $C^*$-algebra. 
	Moreover, as noted in the statement preceding \cite[Lemma 3.5]{Ped72}, $\CU_0$ contains the weak-$^*$-limits of all increasing nets in $A_\sa$. 
	Hence, $\CU_0$ contains all the open projections of $A$. 
	
	For any Borel subset $S\subseteq \BR$, we denote by $\chi_S$ the indicator function of $S$. 
	Consider $\alpha, \beta, \gamma \in \BR$ with $\alpha < \beta < \gamma$. 
	By the hypothesis, $\chi_{(\alpha,\gamma)}(x)$ is q-open.
	We fix an open projection $p_{(\alpha, \gamma)}\in \CP_{A^{**}}$ satisfying $\Lambda_A(p_{(\alpha, \gamma)}) = \chi_{(\alpha,\gamma)}(x)$.
	As $\chi_{(\alpha, \beta]}(x) =  \chi_{(\alpha, \gamma)}(x) - \chi_{(\beta, \gamma)}(x)$, the element $p_{(\alpha, \beta]} := p_{(\alpha,\gamma)} - p_{(\beta,\gamma)}$ will satisfy
	$$\Lambda_A(p_{(\alpha, \beta]}) = \chi_{(\alpha,\beta]}(x).$$
	
	Let us consider a sequence $\{x_n\}_{n\in \BN}$ in $A^{**}_\ba$ whose members are real linear spans of elements of the form $\chi_{(\alpha, \beta]}(x)$ such that $\|x-x_n\| \to 0$. 
	The above then produces a sequence $\{y_n\}_{n\in \BN}$ in $\CU_0$ with $\Lambda_A(y_n) = x_n$. 
	Since the restriction of $\Lambda_A$ on $\CU_0$ is isometric (see \cite[Theorem 3.8]{Ped72}), we know that the sequence $\{y_n\}_{n\in \BN}$ is Cauchy in $\CU_0$, and hence it converges to an element $y\in \CU_0$ that satisfies $\Lambda_A(y) = x$. 
	
	Pick an arbitrary open subset $O\subseteq (-\|x\| - 1, \|x\| + 1)$. 
	There exists an increasing sequence $\{f_n\}_{n\in \BN}$ in $C_0(\BR)_+$ that converges pointwisely to $\chi_O$. 
	As $f_n(y)\in \CU_0$ and $\CU_0$ is monotone sequential closed, we know that $\chi_O(y)$ belongs to $\CU_0$. 
	Since the $^*$-homomorphism $\Lambda_A$ is weak-$^*$-continuous, we have $\Lambda_A\big(\chi_O(y)\big) = \chi_O(x)$. 
	On the other hand, the hypothesis tells us that there is an open projection $q$ of $A$ with $\Lambda_A(q) = \chi_O(x)$. 
	Now, because both $\chi_O(y)$ and $q$ belongs to $\CU_0$ and $\Lambda_A$ restricts to an injection on $\CU_0$, we know that $q = \chi_O(y)$. 
	In other words, all the spectral projections of $y$ in $A^{**}$ with respects to open subsets of $\BR$ are open projections of $A$.
	Therefore, \cite[Theorem 2.2]{APT73} tells us that $y$ belongs to the canonical image of $M(A)$ in $A^{**}$. 
	Hence, $x$ is in the canonical image of $M(A)$ in $A^{**}_\ba$.
\end{proof}


\medskip

\begin{lem}\label{lem:C-st-alg}
	Let $A$ and $B$ be $C^*$-algebras. 
	Suppose that $\Phi:A^{**}_\ba \to B^{**}_\ba$ is a weak-$^*$-continuous unital Jordan $^*$-homomorphism.
	
	\smnoind
	(a) If $\Phi(\CC(A))\subseteq \CC(B)$, then $\Phi\big(M(A))\subseteq M(B)$. 
	
	\smnoind
	(b) If $\Phi$ is bijective and $\Phi(M(A)) =  M(B)$, then $\Phi\big(A)= B$.
\end{lem}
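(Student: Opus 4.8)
The plan is to handle the two parts with the help of Proposition \ref{prop:atom-Thm2.2-APT} and the fact that a weak-$^*$-continuous unital Jordan $^*$-homomorphism intertwines the functional calculus of a single self-adjoint element. For part (a), I would first reduce to self-adjoint arguments, since $M(A) = M(A)_\sa + \rmi M(A)_\sa$ and $\Phi$ preserves the involution. Given $a \in M(A)_\sa$, the Jordan identity $\Phi(aba) = \Phi(a)\Phi(b)\Phi(a)$ gives $\Phi(a^n) = \Phi(a)^n$ by induction, so $\Phi$ preserves polynomials in $a$, hence (by norm-continuity) the continuous functional calculus, and finally (by weak-$^*$-continuity, passing to decreasing limits $f_n \downarrow \chi_S$ for closed $S$) the spectral projections: $\Phi(\chi_S(a)) = \chi_S(\Phi(a))$. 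By Proposition \ref{prop:atom-Thm2.2-APT}, every $\chi_S(a)$ with $S \subseteq \BR$ closed lies in $\CC(A)$, so $\chi_S(\Phi(a)) = \Phi(\chi_S(a)) \in \Phi(\CC(A)) \subseteq \CC(B)$; Proposition \ref{prop:atom-Thm2.2-APT} applied to $B$ then yields $\Phi(a) \in M(B)$, and linearity gives $\Phi(M(A)) \subseteq M(B)$.

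For part (b) the idea is to recover $A$ inside $M(A)$ by an intrinsic, order-theoretic condition that $\Phi$ must transport. After reducing to self-adjoint elements, the crucial intermediate claim is $\Phi(\CC(A)) = \CC(B)$, which is delicate precisely because q-closedness is \emph{defined through $A$}, whereas we only control $M(A)$. I would prove that the open projections of $A$ coincide, inside $A^{**}_\ba$, with those of $M(A)$: the inclusion ``$A$-open $\Rightarrow$ $M(A)$-open'' is trivial, and for the converse, given an increasing net in $M(A)_+$ with supremum $u$, compressing by an approximate unit $\{e_i\} \subseteq A_+$ (so that $e_i^{1/2} b e_i^{1/2} \in A_+$ for $b \in M(A)_+$) realizes $u$ as a supremum of elements of $A_+$. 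Equivalently, a projection $p$ is q-closed if and only if $p = \bigwedge\{x \in M(A): 0 \le x \le 1,\ x \ge p\}$, a formula internal to the pair $(M(A) \subseteq A^{**}_\ba)$. Since $\Phi$ is a weak-$^*$-continuous order isomorphism with $\Phi(M(A)) = M(B)$, it preserves such infima, giving $\Phi(\CC(A)) = \CC(B)$ and, dually, preservation of the q-open projections.

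Next I would note that compactness of projections is purely q-topological: a q-closed projection $p$ is compact exactly when every cover of $p$ by q-open projections admits a finite subcover. As $\Phi$ preserves q-open and q-closed projections, the order, and suprema of increasing nets, it carries compact projections to compact projections in both directions. The final ingredient is the characterization that, for $x \in M(A)_\sa$, one has $x \in A$ if and only if each spectral projection $\chi_{\{|t|\geq\epsilon\}}(x)$ ($\epsilon > 0$) is compact: necessity follows from $\chi_{\{|t|\geq\epsilon\}}(x) \leq \epsilon^{-1}|x|$ with $|x| \in A_+$ together with q-closedness from Proposition \ref{prop:atom-Thm2.2-APT}; sufficiency follows by truncating $x$ with a continuous $f_\epsilon$ vanishing near $0$, so that $f_\epsilon(x) \in M(A)$ is supported under a compact projection and hence $f_\epsilon(x) = f_\epsilon(x)\,c \in A$ for a local unit $c \in A_+$, while $f_\epsilon(x) \to x$ in norm. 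Combining this with $\Phi(\chi_S(x)) = \chi_S(\Phi(x))$ from part (a), the preservation of compactness, and $\Phi(M(A)) = M(B)$, one obtains $x \in A_\sa \iff \Phi(x) \in B_\sa$, whence $\Phi(A) = B$ by $^*$-linearity.

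I expect the main obstacle to be the step $\Phi(\CC(A)) = \CC(B)$: since q-closedness refers to $A$ while the hypothesis only gives $\Phi(M(A)) = M(B)$, one cannot simply transport defining nets (their members would have to be known to lie in $A$, which is what we are trying to prove). The resolution I would pursue is the identity that the open projections of $A$ and of $M(A)$ agree in $A^{**}_\ba$, which encodes q-closedness in terms of the pair $(M(A), A^{**}_\ba)$ alone and thereby breaks the apparent circularity; verifying this identity, and checking that compactness is genuinely expressible through q-open and q-closed projections, are the points requiring the most care.
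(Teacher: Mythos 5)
Your part (a) is correct and is essentially the paper's own argument: the paper packages your functional-calculus intertwining by observing that $\Phi$ restricts to a weak-$^*$-continuous unital $^*$-homomorphism on the commutative von Neumann subalgebra $W^*(a,1)$, and then applies Proposition \ref{prop:atom-Thm2.2-APT} in both directions, exactly as you do.

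Part (b), however, has a genuine gap at the step you yourself flag as needing care: the claim that compactness of a q-closed projection is characterized by the Heine--Borel covering property (``every cover by q-open projections admits a finite subcover''). This is false for noncommutative $C^*$-algebras. Take $A=\CK(H)$ for a separable infinite-dimensional Hilbert space $H$; then $A^{**}=A^{**}_\ba=\CB(H)$, every projection is both q-open and q-closed, and the compact projections are precisely the finite-rank ones (if $p\le a$ with $a\in \CK(H)_+$, then $p=pap$ is a compact operator). Now let $\{e_n\}$ be an orthonormal basis, $p$ the rank-one projection onto $\BC\xi$ with $\xi=\sum_n 2^{-n/2}e_n$, and $q_n$ the rank-one projection onto $\BC e_n$. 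Then $p$ is compact and $p\le \bigvee_n q_n = 1$, yet no finite join $q_{n_1}\vee\cdots\vee q_{n_k}$ dominates $p$, since $\xi\notin \mathrm{span}\{e_{n_1},\dots,e_{n_k}\}$. The classical equivalence between the finite-intersection property for closed sets and finite subcovers for open covers breaks down here because the projection lattice is not distributive ($p\not\le q$ no longer produces a nonzero $p\wedge q'$). Consequently, your mechanism for transporting compactness across $\Phi$ collapses, and the final characterization (for $x\in M(A)_\sa$, $x\in A$ iff each $\chi_{\{|t|\ge\epsilon\}}(x)$ is compact --- which is indeed correct, and your necessity/sufficiency argument for it is fine) cannot be carried from $A$ to $B$. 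A secondary flaw: your proof that a projection which is the supremum of an increasing net from $M(A)_+$ is open relative to $A$ does not work via compression by an approximate unit, because the elements $e_i^{1/2}b_je_i^{1/2}$ need not form an increasing net; the claim itself is true, but the correct route is through \cite[Theorem 2.2]{APT73}: each $\chi_{(\epsilon,\infty)}(b_\lambda)$ is an open projection dominated by the limit projection, and their supremum equals it. For comparison, the paper's proof of (b) avoids compactness entirely and is much shorter: it identifies $A$ with the ideal of $M(A)$ determined by the central q-open projection $q_A$ of $M(A)$ (the support of the canonical map $\bar\iota_A: M(A)^{**}_\ba\to A^{**}_\ba$), transports it through the induced Jordan $^*$-isomorphism $\Psi:=\iota_B^{-1}\circ\Phi\circ\iota_A$ of multiplier algebras to get $\Psi^{**}(q_A)\ge q_B$, hence $B\subseteq\Phi(A)$, and concludes by symmetry. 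If you wish to salvage your strategy, you would need a transportable characterization of compactness; the covering property is not one, and repairing this (say, via a finite-intersection-property characterization, whose converse direction is itself nontrivial) is precisely the missing content.
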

\begin{proof}
	(a) Consider $a\in M(A)_\sa$. 
	The unital von Neumann subalgebra $W^*(a,1)$ of $A^{**}_\ba$ generated by $a$ is commutative. 
	Hence, $\Phi$ restricts to a weak-$^*$-continuous unital $^*$-homomorphism from $W^*(a,1)$ to $W^*\big(\Phi(a),1\big)$. 
	Now, it follows from Proposition \ref{prop:atom-Thm2.2-APT} that $\Phi(a)\in M(B)$.
	
	\smnoind
	(b) Denote by  $\iota_A:M(A)\to A^{**}_\ba$ the canonical embedding, and consider $\bar \iota_A: M(A)^{**}_\ba \to A^{**}_\ba$ to be the restriction of the weak-$^*$-continuous $^*$-homomorphism $\ti\iota_A: M(A)^{**} \to A^{**}_\ba$ extending $\iota_A$. 
	It is not hard to see that the support of $\bar \iota_A$ coincides with the unique central q-open projection  $q_A$ of $M(A)$ that satisfies $A = q_AM(A)^{**}_\ba q_A\cap M(A)$. 
	Define $\Psi(x) := \iota_B^{-1}\circ \Phi\circ \iota_A(x)\in M(B)$ ($x\in M(A)$).
	Then $\ti \iota_B\circ \Psi^{**}  = \Phi\circ \ti \iota_A$ and 
	$$\bar \iota_B\circ \Psi^{**}|_{M(A)^{**}_\ba} = \Phi\circ \bar \iota_A.$$
	Thus, $\bar \iota_B(\Psi^{**}(q_A)) = 1$ which implies $\Psi^{**}(q_A)\geq q_B$.
	Hence, $\Psi(A)\supseteq B$; i.e., $B\subseteq \Phi(A)$.
	Similarly, we have $A\subseteq \Phi^{-1}(B)$. 
\end{proof}

\medskip

Note the difference between part (b) above and \cite[Proposition 2.3]{NW} that the atomic parts of the biduals are considered here (and so a different proof is required).

\medskip

\begin{thm}\label{thm:main}
	Let $A$ and $B$ be two $C^*$-algebras such that $A$ does not have a 2 dimensional irreducible $^*$-representation. 
	If $\Phi: \CC(A)\setminus \{0\} \to \CC(B)\setminus \{0\}$ is a strict quantum bijection (where the q-distinctness relations are the orthogonality relations), then there exists a Jordan $^*$-isomorphism $\Theta:A\to B$ with $\Phi = \Theta^{**}|_{\CC(A)}$. 
\end{thm}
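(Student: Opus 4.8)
The plan is to pass to the atomic parts of the biduals, recognise $\Phi$ as a complete ortholattice isomorphism of projection lattices, invoke Dye's theorem (Theorem \ref{thm-Dye}) there, and then descend to $A$ and $B$ via Lemma \ref{lem:C-st-alg}. The whole difficulty is concentrated in identifying the abstract quantum set $(\CC(A)\setminus\{0\}, \neq_\mo)$ with the concrete lattice $\CP_{A^{**}_\ba}$ in a way that is complete and compatible with $\Phi$.

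First I would establish that identification. Recall $A^{**}_\ba$ is an atomic von Neumann algebra whose atoms are the minimal projections $\CP^\mm_{A^{**}}$, each fixed by $\Lambda_A$ (as $\bz_A$ dominates every atom); being the support projection of a pure state, each atom is closed, hence q-closed, so $\CP^\mm_{A^{**}}\subseteq \CC(A)$. For $D\subseteq \CC(A)\setminus\{0\}$ put $r:=\bigvee_{p\in D}p\in \CP_{A^{**}_\ba}$; orthogonality then gives $D^\mc=\{q\in \CC(A)\setminus\{0\}: q\leq \bz_A-r\}$, and since $A^{**}_\ba$ is atomic with atoms q-closed, every projection is the join of the q-closed projections beneath it, whence $D^{\mc\mc}=\{p\in \CC(A)\setminus\{0\}: p\leq r\}$. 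Therefore $r\mapsto \{p\in\CC(A)\setminus\{0\}: p\leq r\}$ is a complete ortholattice isomorphism from $\CP_{A^{**}_\ba}$ onto $\CQ(\CC(A)\setminus\{0\})$, under which a point $p\in\CC(A)\setminus\{0\}$ corresponds to the q-subset $\{p\}^{\mc\mc}$.

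Next, Lemma \ref{lem:lattice}(d) converts the strict quantum bijection $\Phi$ into a complete ortholattice isomorphism $\bar\Phi:\CQ(\CC(A)\setminus\{0\})\to\CQ(\CC(B)\setminus\{0\})$, which via the identifications above becomes a complete ortholattice isomorphism $\CP_{A^{**}_\ba}\to\CP_{B^{**}_\ba}$. Because $\Phi(\{p\}^{\mc\mc})=\{\Phi(p)\}^{\mc\mc}$, this isomorphism restricts to $\Phi$ itself on $\CC(A)\setminus\{0\}$; in particular it is a strict quantum bijection $\CP_{A^{**}_\ba}\setminus\{0\}\to\CP_{B^{**}_\ba}\setminus\{0\}$ for the orthogonality relations. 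Since $A^{**}_\ba\cong\prod_{\pi\in\widehat{A}}\CB(\CH_\pi)$ has a type $\mathrm{I}_2$ summand exactly when $A$ has a $2$-dimensional irreducible $^*$-representation, the hypothesis makes Theorem \ref{thm-Dye} applicable, yielding a unital Jordan $^*$-isomorphism $\widetilde\Theta:A^{**}_\ba\to B^{**}_\ba$ extending $\bar\Phi$; as $\bar\Phi$ preserves arbitrary joins, $\widetilde\Theta$ is normal, and $\widetilde\Theta|_{\CC(A)}=\Phi$.

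Finally I would descend. Since $\widetilde\Theta(\CC(A))=\Phi(\CC(A))=\CC(B)$, Lemma \ref{lem:C-st-alg}(a) gives $\widetilde\Theta(M(A))\subseteq M(B)$; applying the same to $\widetilde\Theta^{-1}$ forces $\widetilde\Theta(M(A))=M(B)$, and then Lemma \ref{lem:C-st-alg}(b) gives $\widetilde\Theta(A)=B$. Setting $\Theta:=\widetilde\Theta|_A$ produces the required Jordan $^*$-isomorphism $A\to B$, and because $\widetilde\Theta$ is the (unique, normal) extension of $\Theta$ to the atomic part, it coincides with $\Theta^{**}$ there, so $\Theta^{**}|_{\CC(A)}=\widetilde\Theta|_{\CC(A)}=\Phi$. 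The main obstacle I anticipate is the opening identification: checking that the q-subsets of $(\CC(A)\setminus\{0\},\neq_\mo)$ are exactly the order ideals $\{p\in\CC(A):p\leq r\}$, that this matches $\CP_{A^{**}_\ba}$ \emph{completely} (so that both Dye's theorem and the normality of its extension—needed for Lemma \ref{lem:C-st-alg}—are genuinely available), and the supporting facts that atoms are q-closed and join-generate every projection; by comparison the type $\mathrm{I}_2$ translation and the final descent are routine.
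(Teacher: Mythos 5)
Your proposal is correct and is essentially the paper's own argument: the paper likewise extends $\Phi$ to an ortholattice isomorphism $\Upsilon(p)=\bigvee\Phi\big(\CP^\mm_{A^{**}}[p]\big)$ from $\CP_{A^{**}_\ba}$ onto $\CP_{B^{**}_\ba}$ (resting on the same two facts you isolate, namely that minimal projections are q-closed and that $\CP_{A^{**}_\ba}$ is atomistic), then applies Dye's theorem and descends to $A$ via Lemma \ref{lem:C-st-alg}. The only cosmetic difference is that you package the extension step as the identification $\CQ(\CC(A)\setminus\{0\})\cong\CP_{A^{**}_\ba}$ combined with Lemma \ref{lem:lattice}(d), whereas the paper verifies directly that $\Phi$ is an order isomorphism carrying atoms to atoms and then invokes the proof of Proposition \ref{prop:atom-hered-q-distinct}(b).
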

\begin{proof}
As in Proposition \ref{prop:atom-hered-q-distinct}, we consider the q-distinctness relation $\neq_{\CP_{A^{**}_\ba}}$ on $\CP_{A^{**}_\ba}^\star$. 
Obviously, this q-distinctness relation extends the one on $\CC(A)\setminus \{0\}$ as considered in the statement. 
In the following, $^\mc$ is the q-complement in $\CQ\big(\CP_{A^{**}_\ba}^\star\big)$. 
Let us set 
\begin{equation}\label{eqt:def-CP[p]}
\CP^\mm_{A^{**}}[p]:=\{e\in \CP^\mm_{A^{**}}: e \leq p \} \qquad (p\in \CP_{A^{**}_\ba}).
\end{equation}
We recall that minimal projections in $A^{**}_\ba$ are q-closed (see e.g. \cite[Corollary 2]{Akemann68}). 
Thus, $\CP^\mm_{A^{**}}$ coincides with the set of minimal elements in the ordered subset $\CC(A)\setminus \{0\}$ of $\CP_{A^{**}_\ba}$. 
	
	Suppose that $r,s\in \CC(A)$ with $r\leq s$. 
	For any $f\in \{\Phi(s)\}^\mc\cap \CP_{B^{**}}^\mm$, one has $\Phi^{-1}(f)s = 0$ and hence  
	$$f\in \{\Phi(r)\}^\mc\cap \CP_{B^{**}}^\mm.$$ 
	As $\CP_{B^{**}_\ba}$ is an atomistic lattice,  the above implies that $\{\Phi(s)\}^\mc\subseteq \{\Phi(r)\}^\mc$, and hence, $\Phi(r)\leq \Phi(s)$. 
	This means that $\Phi$ is order preserving.
	Similarly, $\Phi^{-1}$ is order preserving.
	In particular, $\Phi(\CP^\mm_{A^{**}}) = \CP^\mm_{B^{**}}$.
	Thus, by the proof of Proposition \ref{prop:atom-hered-q-distinct}(b) (observe that $\CP_{A^{**}}^\mm[p] = \Xi_{\CP_{A^{**}}}(p)$), there is an ortholattice isomorphism $\Upsilon: \CP_{A^{**}_\ba}\to \CP_{B^{**}_\ba}$ satisfying 
	$$\Upsilon(p):= \bigvee \Phi\big(\CP^\mm_{A^{**}} [p]\big) \qquad (p\in \CP_{A^{**}_\ba}).$$
	
	Since $\Phi$ is an order isomorphism, if $r\in \CC(A)$, then 
	$\Phi\big(\CP^\mm_{A^{**}} [r]\big) = \CP^\mm_{B^{**}} [\Phi(r)],$
	and hence $\Upsilon(r) = \bigvee \CP^\mm_{B^{**}} [\Phi(r)] = \Phi(r)$. 
	This means that  $\Upsilon$ extends $ \Phi$. 
	Moreover, as $\Upsilon$ is an ortholattice isomorphism, we have 
	\begin{equation}\label{eqt:bij}
	\CP^\mm_{B^{**}}[\Upsilon(p)] =  \Upsilon\big(\CP^\mm_{A^{**}}[p]\big) =  \Phi\big(\CP^\mm_{A^{**}}[p]\big)  \qquad (p\in \CP_{A^{**}_\ba}).
	\end{equation}

	In the same way, the strict quantum bijection $\Phi^{-1}$ also extends to an ortholattice isomorphism $\Upsilon^\#: \CP_{B^{**}_\ba} \to \CP_{A^{**}_\ba}$. 
	By Relation \eqref{eqt:bij}, the map $\Upsilon^\#$ is the inverse of $\Upsilon$, which implies that $\Upsilon|_{\CP_{A^{**}_\ba}^\star}$ is a strict quantum bijection. 
	Therefore, it follows from Theorem \ref{thm-Dye} that $\Upsilon$ extends to a Jordan $^*$-isomorphism $\bar \Phi$ from $A^{**}_\ba$ onto $B^{**}_\ba$. 
	The assumption on $\Phi$ and Lemma \ref{lem:C-st-alg} now tells us that $\bar \Phi(A) = B$.
	
	We denote $\Theta:= \bar \Phi|_A$. 
	By the weak-$^*$-density of $A$ in $A^{**}_\ba$ and the automatic weak-$^*$-continuity of $\bar \Phi$, we know that $\bar \Phi\circ \Lambda_A = \Lambda_B\circ \Theta^{**}$.
	Finally, as $\bar \Phi$ extends $ \Phi$, we know  that $\Theta^{**}$ extends $\Phi$.  
\end{proof}

\medskip

\begin{rem}\label{rem:main-general-case}
	(a) As in the case of von Neumann algebra, when $A = B = \BM_2$, a strict quantum bijection from $\CC(A)\setminus \{0\} = \CP_A\setminus \{0\}$ onto $\CC(B)\setminus \{0\}$ may not comes from a Jordan $^*$-isomorphism from $A$ to $B$ (see Example \ref{eg:M2} in the next section for a clearer illustration on this). 
	
	\smnoind
	(b) The assumption on $A$ concerning irreducible $^*$-representations can be rephrased as $A$ not having $\BM_2$ as a quotient $C^*$-algebra; equivalently, for any closed ideal $I\subseteq A$ with $A/I$ not having a representation on $\BC$, there exist $x_1, x_2, x_3, x_4\in A/I$ such that 
	$${\sum}_{\sigma\in S_4} \mathrm{sgn}(\sigma) x_{\sigma(1)}x_{\sigma(2)}x_{\sigma(3)}x_{\sigma(4)}\neq 0,$$ where $S_4$ is the permutation group on 4 elements and $\mathrm{sgn}(\sigma)$ is the sign of $\sigma$ (see \cite[Proposition 2.3]{FR}). 
\end{rem}



\medskip

We also have the corresponding statement of Theorem \ref{thm:main} concerning the set $\CC_0(A)$ of all closed projections (instead of q-closed projections). 

\medskip

\begin{cor}\label{cor:closed-proj}
	Let $A$ and $B$ be two $C^*$-algebras such that $\BM_2$ is not a quotient $C^*$-algebra of $A$. 
	If $\Phi: \CC_0(A)\setminus \{0\} \to \CC_0(B)\setminus \{0\}$ is a strict quantum bijection (again, when the q-distinctness relations induced by orthogonality is considered), there is a Jordan $^*$-isomorphism $\Theta:A\to B$ such that $\Phi = \Theta^{**}|_{\CC_0(A)}$. 
\end{cor}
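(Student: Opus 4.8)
The plan is to deduce this from Theorem \ref{thm:main} by transporting everything from the closed projections in $A^{**}$ to the q-closed projections in $A^{**}_\ba$ through the normal $^*$-homomorphism $\Lambda_A$. Since $\bz_A=\bigvee\CP^\mm_{A^{**}}$ is central in $A^{**}$, we have $\Lambda_A(x)=\bz_A x$ and, for projections $p,q\in\CP_{A^{**}}$, $\Lambda_A(p)\Lambda_A(q)=\bz_A pq$; hence $p\perp q$ immediately gives $\Lambda_A(p)\perp\Lambda_A(q)$. Moreover, by the definition of q-closed projections, $\Lambda_A$ maps $\CC_0(A)$ onto $\CC(A)$. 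The crux is therefore the following key lemma: $\Lambda_A$ restricts to a strict quantum bijection (preserving the orthogonality relation in both directions) from $\CC_0(A)\setminus\{0\}$ onto $\CC(A)\setminus\{0\}$.

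For the bijectivity part of this lemma I would pass to the open complements, noting $p\in\CC_0(A)$ iff $1-p$ is an open projection, and combine the order isomorphism between open projections of $A$ and closed left ideals via $q\mapsto A^{**}q\cap A$ (see \cite[\S 3.11.10]{Ped79} and \cite{Ake69}) with the analogous order isomorphism between q-open projections and closed left ideals via $\bar q\mapsto A^{**}_\ba\bar q\cap A$ (\cite[Theorem II.17]{Ake69}). Using the centrality of $\bz_A$ and the faithfulness of the atomic representation $\Lambda_A|_A\colon A\hookrightarrow A^{**}_\ba$, one checks that the composite of these two correspondences is exactly $q\mapsto\bz_A q=\Lambda_A(q)$: for $a\in A$ one has $a\in\mathrm{her}(q)$ iff $qaq=a$ iff $\Lambda_A(q)\Lambda_A(a)\Lambda_A(q)=\Lambda_A(a)$, so $\Lambda_A(q)$ determines $\mathrm{her}(q)$ and hence $q$. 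Taking complements shows that $\Lambda_A|_{\CC_0(A)}$ is a bijection onto $\CC(A)$.

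The main obstacle is the backward orthogonality. Suppose $p_1,p_2\in\CC_0(A)$ with $\Lambda_A(p_1)\perp\Lambda_A(p_2)$, so that $\Lambda_A(p_1)\le\Lambda_A(1-p_2)$, where $1-p_2$ is open. I would use the recovery above, together with the faithfulness of the atomic representation and an Akemann-type interpolation between the closed projection $p_1$ and the open projection $1-p_2$, to upgrade this to $p_1\le 1-p_2$, i.e.\ $p_1\perp p_2$. This is precisely the step where the non-atomic part of $A^{**}$ must be shown not to interfere; everything else in the lemma is forced by the centrality of $\bz_A$ and the two left-ideal correspondences.

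Granting the key lemma, I would set $\Phi':=\Lambda_B\circ\Phi\circ(\Lambda_A|_{\CC_0(A)})^{-1}\colon\CC(A)\setminus\{0\}\to\CC(B)\setminus\{0\}$; as a composite of strict quantum bijections it is again a strict quantum bijection. Since $\BM_2$ is not a quotient of $A$, equivalently $A$ has no $2$-dimensional irreducible representation by Remark \ref{rem:main-general-case}(b), Theorem \ref{thm:main} produces a Jordan $^*$-isomorphism $\Theta\colon A\to B$ whose induced Jordan $^*$-isomorphism $\bar\Phi$ of the atomic parts satisfies $\bar\Phi\circ\Lambda_A=\Lambda_B\circ\Theta^{**}$ and $\bar\Phi|_{\CC(A)}=\Phi'$. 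Finally, $\Theta^{**}$ carries $\CC_0(A)$ into $\CC_0(B)$ (if $1-a_i\to p$ weak-$^*$ with $a_i\in A_+$ increasing, then $1-\Theta(a_i)\to\Theta^{**}(p)$, so $\Theta^{**}(p)$ is closed), and for $p\in\CC_0(A)$ one computes $\Lambda_B(\Theta^{**}(p))=\bar\Phi(\Lambda_A(p))=\Phi'(\Lambda_A(p))=\Lambda_B(\Phi(p))$. As both $\Theta^{**}(p)$ and $\Phi(p)$ lie in $\CC_0(B)$, the injectivity of $\Lambda_B$ on $\CC_0(B)$ from the key lemma gives $\Theta^{**}(p)=\Phi(p)$, that is, $\Phi=\Theta^{**}|_{\CC_0(A)}$.
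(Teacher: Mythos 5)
Your reduction to Theorem \ref{thm:main} is the same as the paper's, and your closing transport argument (defining $\Phi'$, applying Theorem \ref{thm:main}, and using injectivity of $\Lambda_B$ on $\CC_0(B)$ to conclude $\Theta^{**}(p)=\Phi(p)$) is correct and in fact spelled out more fully than in the paper, which simply says that the argument of Theorem \ref{thm:main} gives the conclusion. The problem is that your ``key lemma'' --- that $\Lambda_A$ restricts to a strict quantum bijection from $\CC_0(A)\setminus\{0\}$ onto $\CC(A)\setminus\{0\}$ --- is exactly where the content of the corollary lies, and your proposal does not actually prove it. For the bijectivity, your verification that the composite of the two left-ideal correspondences is $q\mapsto\bz_A q$ breaks down at the step ``$qaq=a$ iff $\Lambda_A(q)\Lambda_A(a)\Lambda_A(q)=\Lambda_A(a)$'': the backward implication would require injectivity of $\Lambda_A$ on the element $qaq-a$, which lies in $A^{**}$ but not in $A$ (only $a\in A$; $q$ is merely an open projection of $A^{**}$). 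Faithfulness of the atomic representation applies only to elements of $A$; the kernel of $\Lambda_A$ on $A^{**}$ is the entire non-atomic part, so this is a genuine failure, not a formality.

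More seriously, the backward orthogonality --- from $\Lambda_A(p_1)\le\Lambda_A(1-p_2)$ with $p_1$ closed and $1-p_2$ open, deduce $p_1\le 1-p_2$ --- is precisely the crux, and you leave it as ``an Akemann-type interpolation'' without either a proof or a precise citation. Note that Akemann's interpolation and non-commutative Urysohn results take an inclusion $p\le q$ between a closed and an open projection as \emph{hypothesis} and produce an interpolating element of $A$; they cannot be used to establish that inclusion. What is needed, and what covers both of your gaps at once, is \cite[Theorem II.17]{Ake69}: the map $p\mapsto \bz_A p$ is a bijection from closed (resp.\ open) projections of $A$ onto q-closed (resp.\ q-open) projections, and for $p$ closed and $q$ open one has $p\le q$ if and only if $\bz_A p\le \bz_A q$. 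This comparison statement is a nontrivial theorem about the atomic representation (it fails for general pairs of projections of $A^{**}$, and even injectivity of $\Lambda_A$ on closed projections is not formal); the paper's entire proof consists of citing it, checking that it yields the strict quantum bijection $\Lambda_A:\CC_0(A)\setminus\{0\}\to\CC(A)\setminus\{0\}$, and then running the argument of Theorem \ref{thm:main}. Without that ingredient, your argument does not close.
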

\begin{proof}
By \cite[Theorem II.17]{Ake69}, we know that $\Lambda_A$ induces a bijection from $\CC_0(A)$ onto $\CC(A)$ (notice that although the unital assumption is needed in \cite{Ake69}, the result \cite[Theorem II.17]{Ake69} still holds without the unital assumption). 
Consider $p,q\in \CC_0(A)$ with $\Lambda_A(p) \Lambda_A(q) = 0$. 
Then $\Lambda_A(p)\leq \Lambda_A(\mathbf{1}-q)$ and \cite[Theorem II.17]{Ake69} tells us that $p \leq \mathbf{1}-q$, because $\mathbf{1} - q$ is an open projection of $A$. 
This means that $\Lambda_A: \CC_0(A)\setminus \{0\} \to \CC(A)\setminus \{0\}$ is a strict quantum bijection. 
Now, the argument of Theorem \ref{thm:main} implies the required conclusion. 
\end{proof}

\medskip

\section{Quantum topologial spaces and Gelfand spectra of $C^*$-algebras}\label{sec:non-comm-Gelf}

\medskip

\begin{defn}\label{defn:quantum top}
Let $\CL$ be an ortholattice. 
A \emph{quantum topology} on $\CL$ is a subcollection $\CC\subseteq \CL$ satisfying:
\begin{enumerate}[\ \ S1).]
	\item $0, 1\in \CC$;
	\item if $\{{p_\lambda}\}_{\lambda\in \Lambda}$ is a family in $\CC$, then $\bigwedge_{\lambda\in \Lambda}p_\lambda$ exists and belongs to $\CC$;
	\item if $p$ and $q$ are q-commuting elements in $\CC$ (see Definition \ref{defn:q-comm}(a)), then $p\vee q\in \CC$.
\end{enumerate} 
In this case, elements in $\CC$ are said to be \emph{quantum closed}, while elements of the form $p'$ for some $p\in \CC$ are said to be \emph{quantum open}. 
\end{defn}	




\medskip 
	
\begin{defn}
(a)	Let $(X, \neq_\mq)$ be a quantum set.  
If $\CC\subseteq \CQ(X)$ is a quantum topology, then $(X, \neq_\mq, \CC)$ is called a \emph{quantum topological space}.

\smnoind
(b) A \emph{strict quantum homeomorphism} from a quantum topological space $(X, \neq_\mq, \CC)$ to another  quantum topological space $(Y, \neq_\mq, \mathcal{D})$ is a strict quantum bijection $\Psi:X\to Y$ satisfying $\mathcal{D} = \{\Psi(C): C\in \CC \}$. 
\end{defn}



\medskip

Observe that if $\neq_\mq$ is classical, then quantum topologies on $(X, \neq_\mq)$ coincides with usual topologies on $X$, and strict quantum homeomorphisms are precisely ordinary homeomorphisms. 
We will consider general quantum continuous maps between quantum topological spaces in a later work (\cite{Ng-Cat-QS}).

\medskip

In the following, we consider a particular kind of quantum topological spaces. 
Let $B$ be a $C^*$-algebra and $\KP^B$ be the set of all pure states on $B$. 
For $\phi\in \KP^B$, we denote by $\bs_\phi\in \CP_{B^{**}}$ the support projection of $\phi$ (i.e. $\bs_\phi$ is the smallest projection in $B^{**}$ with $\phi(\bs_\phi) = 1$). 
We equip $\KP^B$ with the q-distinctness relation: 
$$\phi\neq_\mo \psi \quad  \text{if and only if} \quad \bs_\phi\bs_\psi=0;$$ 
equivalently, $\phi\neq_\mo \psi$ means that the transition probability between $\phi$ and $\psi$ is zero.

\medskip

For a closed left ideal $L\subseteq B$, we set 
$$\hull(L):= \big\{\phi\in \KP^B: L \subseteq L_\phi \big\},$$
where $L_\phi:=\{b\in B: \phi(b^*b) =0 \}$. 
It is well-known that $L = \bigcap \{L_\phi: \phi \in \hull(L)\}$. 
Let us denote 
$$\CC^B:= \{\hull(L): L \text{ is a closed left ideal of }B \}.$$ 
We call $(\KP^B, \neq_\mo, \CC^B)$ the \emph{Gelfand spectrum} of $B$. 

\medskip

One can rephrase the Gelfand spectrum in terms of modular maximal left ideals of $B$ (thanks to \cite[Theorem 5.3.5]{Mur}). 
However, be aware that in this case, the correct q-distinctness relation \emph{is not} the one given by $L_1 L_2^* = \{0\}$. 

\medskip

In the following, we will show that $\CC^B$ is a quantum topology.
Indeed, it is clear that $\CC^B$ satisfies Conditions (S1) and (S2). 
In the following, we will verify that all elements in $\CC^B$ are q-subsets and that $\CC^B$ satisfies Condition (S3).  
To do these, we need the following facts. 

\medskip

\begin{lem}\label{lem:PS-min-proj}
Let $B$ be a $C^*$-algebra. 

\smnoind
(a) There is an order reversing bijection from $\CC(B)$ to the set of all closed left ideals of $B$, that associates $p\in \CC(B)$ with $L_p:= B^{**}_\ba(1-p) \cap B$.
In this case, $p\in \CC(B)$ is central if and only if $L_p$ is an ideal.

\smnoind
(b) The assignment $\phi\mapsto \bs_\phi$ is a strict quantum bijection from $\KP^B$ onto $\CP_{B^{**}}^\mm$, when $\CP_{B^{**}}^\mm$ is equipped with the q-distinctness relation induced from $\CP_{B^{**}_\ba}^\star$ (see Proposition \ref{prop:atom-hered-q-distinct}). 

\smnoind
(c) $p\mapsto \KP^B[p]:= \{\phi\in \KP^B: \bs_\phi\leq p \}$ is an ortholattice isomorphism from $\CP_{B^{**}_\ba}$ onto $\CQ(\KP^B)$.  

\smnoind
(d) $\CC^B = \big\{\KP^B[q]: q\in \CC(B) \big\}$. 

\smnoind
(e) For any $p,q\in \CP_{B^{**}_\ba}$, the two q-subsets $\KP^B[p]$ and $\KP^B[q]$ q-commute if and only if $pq = qp$. 
\end{lem}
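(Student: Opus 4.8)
My plan is to prove the five parts in order, letting (a) and (b) (essentially known facts) feed the transport-of-structure arguments in (c)--(e). For part (a) I would start from the Akemann correspondence \cite[Theorem II.17]{Ake69} (which, as noted in the proof of Corollary \ref{cor:closed-proj}, holds without the unital hypothesis): it gives an order-preserving bijection from the q-open projections of $B$ onto the closed left ideals, $u\mapsto B^{**}_\ba u\cap B$. Composing with the order-reversing orthocomplementation $p\mapsto 1-p$ on $\CC(B)$ yields the order-reversing bijection $p\mapsto L_p = B^{**}_\ba(1-p)\cap B$; a one-line check shows $B^{**}_\ba(1-p) = \{x\in B^{**}_\ba: xp = 0\}$, so that $L_p = \{b\in B: bp = 0\}$ is a closed left ideal. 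For the centrality assertion I would use that the two-sided closed ideals correspond exactly to the central q-open projections, together with the fact that $p$ is central iff $1-p$ is.

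For part (b) I would recall the standard bijection between pure states and minimal (rank-one) projections of $B^{**}$: the support $\bs_\phi$ is minimal, every atom arises as some $\bs_\phi$, and $\phi\mapsto\bs_\phi$ is a bijection onto $\CP^\mm_{B^{**}}$. That it is a strict quantum bijection needs no computation: $\phi\neq_\mo\psi$ means $\bs_\phi\bs_\psi = 0$ by definition, while the q-distinctness $\neq_{\CP_{B^{**}_\ba}}$ induced on the atoms (Proposition \ref{prop:atom-hered-q-distinct}) reads $\bs_\phi\leq 1-\bs_\psi$, i.e. the same orthogonality. Part (c) is then formal: $B^{**}_\ba$ is an atomic von Neumann algebra, so $\CP_{B^{**}_\ba}$ is a complete atomistic (indeed orthomodular) ortholattice with atom set $\CP^\mm_{B^{**}}$, and Proposition \ref{prop:atom-hered-q-distinct}(b) makes $p\mapsto\{e\in\CP^\mm_{B^{**}}: e\leq p\}$ an ortholattice isomorphism onto $\CQ(\CP^\mm_{B^{**}})$. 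Transporting along the strict quantum bijection of (b), which by Lemma \ref{lem:lattice}(d) induces an isomorphism $\CQ(\CP^\mm_{B^{**}})\cong\CQ(\KP^B)$, identifies this map with $p\mapsto\KP^B[p]$.

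The heart of the lemma is part (d), and here the key identity is $L_\phi = \{b\in B: b\bs_\phi = 0\}$: since $\phi$ is faithful on $\bs_\phi B^{**}\bs_\phi$, one has $\phi(b^*b) = \phi(\bs_\phi b^*b\,\bs_\phi) = 0$ iff $\bs_\phi b^*b\,\bs_\phi = 0$ iff $b\bs_\phi = 0$. By part (a) this says exactly $L_\phi = L_{\bs_\phi}$, where $\bs_\phi\in\CC(B)$ is q-closed (minimal projections being q-closed). Consequently, for any $q\in\CC(B)$, one has $\phi\in\hull(L_q)$ iff $L_q\subseteq L_\phi = L_{\bs_\phi}$ iff $\bs_\phi\leq q$ (using the order-reversal of part (a)), i.e. iff $\phi\in\KP^B[q]$; thus $\hull(L_q) = \KP^B[q]$. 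Since part (a) realizes every closed left ideal as some $L_q$ with $q\in\CC(B)$, this gives $\CC^B = \{\KP^B[q]: q\in\CC(B)\}$.

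For part (e) I would observe that q-commutativity (Definition \ref{defn:q-comm}(a)) is phrased solely in terms of $\wedge$, ${}'$ and $\leq$, hence is preserved by the ortholattice isomorphism of part (c); so $\KP^B[p]$ and $\KP^B[q]$ q-commute iff $p$ and $q$ q-commute in $\CP_{B^{**}_\ba}$. As $\CP_{B^{**}_\ba}$ is orthomodular, Proposition \ref{prop:cp-comm} turns q-commutativity into the lattice commutativity $p = (p\wedge q)\vee(p\wedge q')$, which for projections of a von Neumann algebra is the classical reformulation of $pq = qp$. I expect the main obstacle to lie in the bookkeeping of parts (a) and (d): fixing the left/right and complement conventions so that the Akemann bijection is genuinely order-reversing with $L_p = \{b: bp = 0\}$, and verifying $L_\phi = L_{\bs_\phi}$ cleanly; once these are settled, (c) and (e) are pure transport of structure.
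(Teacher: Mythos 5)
Your proposal is correct and follows essentially the same route as the paper's (much terser) proof: Akemann's Theorem II.17 for (a), the matching orthogonality relations for (b), Proposition \ref{prop:atom-hered-q-distinct}(b) plus transport along (b) for (c), the identity $L_\phi = L_{\bs_\phi}$ for (d), and reduction to the standard commuting-projections fact for (e). The only cosmetic difference is in (e), where you route through Proposition \ref{prop:cp-comm} while the paper directly invokes the equivalence of $pq=qp$ with orthogonality of $p-p\wedge q$ and $q-p\wedge q$; these amount to the same thing.
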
 
\begin{proof}
(a) This part is well-known (see, e.g., \cite[Theorem II.17]{Ake69} and \cite[\S 3.11.10]{Ped79}). 

\smnoind
(b) This part follows from the definitions of the two q-distinctness relations. 

\smnoind
(c) Note that $p\mapsto \CP_{B^{**}}^\mm[p]$ (see \eqref{eqt:def-CP[p]}) is an ortholattice isomorphism from $\CP_{B^{**}_\ba}$ onto $\CQ(\CP_{B^{**}}^\mm)$ because of the proof of Proposition \ref{prop:atom-hered-q-distinct}(b). 
The conclusion then follows from part (b) above.  

\smnoind
(d) This part follows from parts (a) and (b) as well as the fact that $L_\phi = L_{\bs_\phi}$, for any $\phi\in \KP^B$. 

\smnoind
(e) This follows from part (b) as well as the well-known fact that for any $p,q\in \CP_{B^{**}_\ba}$, one has $pq = qp$ if and only if $p- p\wedge q$ is orthogonal to $q - p\wedge q$.  
\end{proof}

\medskip

We also need the following ``atomic version'' of \cite[Theorem II.7]{Ake69}. 
This result actually follows from the argument of \cite[Theorem II.7]{Ake69} (recall that $\sigma(\bz_A\cdot A^*, A)$-closed left $A$-invariant subspaces of $\bz_A\cdot A^*$  are in bijective correspondence with closed left ideals of $A$; where $\bz_A$ is the central projection as in \eqref{eqt:def-Lambda}). 

\medskip

\begin{lem}\label{lem:sum-orth-closed}
	Let $A$ be a $C^*$-algebra. 
	If $p,q\in \CC(A)$ satisfying $\|p(q-p\wedge q)\|< 1$, then $p\vee q \in \CC(A)$. 
\end{lem}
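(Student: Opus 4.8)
The plan is to follow the proof of \cite[Theorem II.7]{Ake69}, but carried out inside the atomic part rather than in the full bidual. First I would reformulate the statement through closed left ideals. By (the analogue for $A$ of) Lemma \ref{lem:PS-min-proj}(a), the assignment $p\mapsto L_p := A^{**}_\ba(1-p)\cap A$ is an order reversing bijection from $\CC(A)$ onto the closed left ideals of $A$; dually, $1-p$ is the q-open projection attached to $L_p$. Since $\CC(A)$ is stable under arbitrary meets (equivalently, q-open projections are stable under arbitrary joins), the only issue is meets of q-open projections. Now $L_p\cap L_q = A^{**}_\ba\big((1-p)\wedge(1-q)\big)\cap A$ is again a closed left ideal, so it corresponds to a q-open projection $w$ with $w\leq (1-p)\wedge(1-q)$, and the q-closed projection $1-w$ attached to $L_p\cap L_q$ satisfies $1-w\geq p\vee q$. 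One checks that $p\vee q$ is q-closed precisely when $w=(1-p)\wedge(1-q)$ (equivalently $1-w=p\vee q$). Hence the lemma is equivalent to the assertion that, under the hypothesis, the meet of the two q-open projections $1-p$ and $1-q$ is again q-open.

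Next I would record that the hypothesis is an angle condition. Writing $e:=p\wedge q$ and using $pe=e=eq$, one computes $p(q-e)=pq-e=(p-e)(q-e)$, so $\|p(q-p\wedge q)\|=\|(p-e)(q-e)\|$; thus the assumption says precisely that the off-diagonal parts $p-e$ and $q-e$ are at a strictly positive minimal angle. The point of running Akemann's argument in the atomic setting is that all the ingredients transfer verbatim: as recalled in the hint, the $\sigma(\bz_A\cdot A^*,A)$-closed left-$A$-invariant subspaces of the atomic part $\bz_A\cdot A^*$ of $A^*$ are in bijection with the closed left ideals of $A$, and $\Lambda_A$ of \eqref{eqt:def-Lambda} identifies q-open (resp.\ q-closed) projections with the relevant weak-$^*$ limits of increasing nets in $A_+$. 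So I would replace $A^*$ and $A^{**}$ throughout \cite[Theorem II.7]{Ake69} by $\bz_A\cdot A^*$ and $A^{**}_\ba$, leaving the argument otherwise unchanged.

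The crux, exactly as in \cite{Ake69}, is to exploit $\|(p-e)(q-e)\|<1$ to produce a witness for q-openness of $(1-p)\wedge(1-q)$: concretely, an increasing net of contractions in $A_+$ whose $\Lambda_A$-images converge weak-$^*$ to $(1-p)\wedge(1-q)$. The strict inequality is what makes the relevant angle operator $(p-e)(q-e)(p-e)$ have spectrum bounded away from $1$, so that the complementary quantity is bounded below and invertible on the appropriate corner; this in turn forces the algebraic sum of the left-invariant subspaces attached to $L_p$ and $L_q$ to be norm-closed, which is equivalent to $w=(1-p)\wedge(1-q)$. Granting this, $p\vee q=1-w$ is q-closed, as desired. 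The main obstacle is precisely this closed-sum / positive-angle estimate: it is where the bound $<1$ is indispensable, since when $\|p(q-p\wedge q)\|=1$ the angle degenerates and $p\vee q$ may fail to be q-closed.
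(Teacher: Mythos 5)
Your proposal takes essentially the same route as the paper: the paper gives no independent proof of this lemma, but derives it exactly as you do, by running the argument of \cite[Theorem II.7]{Ake69} inside the atomic part $A^{**}_\ba$, using the bijective correspondence between $\sigma(\bz_A\cdot A^*,A)$-closed left $A$-invariant subspaces of $\bz_A\cdot A^*$ and closed left ideals of $A$. Your added details (the reformulation via $L_p=A^{**}_\ba(1-p)\cap A$, the identity $p(q-p\wedge q)=(p-p\wedge q)(q-p\wedge q)$, and the positive-angle/closed-sum step where the hypothesis $\|p(q-p\wedge q)\|<1$ is used) are faithful to Akemann's original argument, so nothing needs correcting.
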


\medskip

\begin{prop}\label{prop:quantum top}
Let $B$ be a $C^*$-algebra.

\smnoind
(a) The q-distinctness relation $\neq_\mo$ on $\KP^B$ is both atomic and hereditary. 

\smnoind
(b) $(\KP^B, \neq_\mo, \CC^B )$ is a quantum topological space. 

\smnoind
(c) Suppose that $\KG: \KP^B\to \widehat{B}$ is the surjection that sends $\omega\in \KP^B$ to the equivalence class $[\pi_\omega]_\sim$ of its GNS construction $\pi_\omega$. 
Then 
$$\big\{\KG^{-1}([\pi]_\sim): [\pi]_\sim\in\widehat{B} \big\}$$ 
is the collection of all minimal q-central elements in $\CQ(\KP^B)$. 
Moreover, $Z\mapsto \KG^{-1}(Z)$ is a bijective correspondence from the collection of closed (respectively, open) subsets of $\widehat{B}$ to the collection of q-central elements in $\CC^B$ (respectively, q-central quantum open subsets of $\KP^B$).
\end{prop}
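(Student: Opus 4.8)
The plan is to push everything through the ortholattice isomorphism $\KP^B[\cdot]\colon\CP_{B^{**}_\ba}\to\CQ(\KP^B)$ of Lemma~\ref{lem:PS-min-proj}(c), which translates q-subsets of $\KP^B$ into projections of the atomic part $B^{**}_\ba\cong\prod_{[\pi]\in\widehat B}\CB(\CH_\pi)$. For \emph{part (a)}, since $B^{**}_\ba$ is a product of type~$\mathrm I$ factors, $\CP_{B^{**}_\ba}$ is a complete atomistic orthomodular lattice whose atoms are the minimal projections $\bs_\phi$; transporting through Lemma~\ref{lem:PS-min-proj}(c) and (b), the atoms of $\CQ(\KP^B)$ are the singletons $\{\phi\}=\KP^B[\bs_\phi]$, so $\neq_\mo$ is atomic, while orthomodularity of $\CQ(\KP^B)$ together with Lemma~\ref{lem:lattice}(b) yields hereditariness.

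For \emph{part (b)}, note first that $\CC^B\subseteq\CQ(\KP^B)$ by Lemma~\ref{lem:PS-min-proj}(d), and that (S1) holds since $\hull(\{0\})=\KP^B$ and $\hull(B)=\emptyset$. For (S2) I would check $\bigcap_\lambda\hull(L_\lambda)=\hull(L)$, where $L$ is the closed left ideal generated by $\bigcup_\lambda L_\lambda$, using that $L\subseteq L_\phi$ iff every $L_\lambda\subseteq L_\phi$ and that each $L_\phi$ is closed; as the meet in $\CQ(\KP^B)$ is the intersection, this returns the meet to $\CC^B$. The crux is (S3): if q-commuting $\KP^B[p],\KP^B[q]\in\CC^B$ have $p,q\in\CC(B)$, then Lemma~\ref{lem:PS-min-proj}(e) gives $pq=qp$, whence $p\wedge q=pq$ and $p(q-p\wedge q)=0$, so Lemma~\ref{lem:sum-orth-closed} (with norm $0<1$) gives $p\vee q\in\CC(B)$, i.e. $\KP^B[p]\vee\KP^B[q]=\KP^B[p\vee q]\in\CC^B$.

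\emph{Part (c)} carries the real content. By Lemma~\ref{lem:PS-min-proj}(e), $\KP^B[p]$ is q-central iff $p$ commutes with every projection of $B^{**}_\ba$, i.e. iff $p$ is central, so $\KP^B[\cdot]$ restricts to an order isomorphism from the central projections of $B^{**}_\ba$ onto the q-central elements of $\CQ(\KP^B)$. As the centre of $B^{**}_\ba$ is $\ell^\infty(\widehat B)$, these are the $z_Z=\sum_{[\pi]\in Z}z_\pi$ for $Z\subseteq\widehat B$, and the identity $\KP^B[z_\pi]=\KG^{-1}([\pi])$ holds because the minimal projection $\bs_\omega$ lies in the block of its own GNS class, so $\bs_\omega\le z_\pi$ iff $\KG(\omega)=[\pi]$; summing gives $\KP^B[z_Z]=\KG^{-1}(Z)$. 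The minimal central projections being the $z_\pi$, the minimal q-central elements are exactly the $\KG^{-1}([\pi])$. For the second assertion I restrict to q-central elements of $\CC^B$, namely $\KP^B[z]$ with $z$ central \emph{and} q-closed; by Lemma~\ref{lem:PS-min-proj}(a) these correspond order-reversingly to closed two-sided ideals $I=L_z$, and $\hull(I)=\KP^B[z]=\KG^{-1}(Z)$ via $\phi\in\hull(I)\iff\pi_\phi(I)=0\iff\KG(\phi)\in\widehat B\setminus U_I$, where $U_I$ is the open subset of $\widehat B$ corresponding to $I$. Since closed ideals correspond to such open subsets, the sets $Z=\widehat B\setminus U_I$ exhaust the closed subsets, yielding the bijection $Z\mapsto\KG^{-1}(Z)$ onto the q-central elements of $\CC^B$ (injectivity coming from surjectivity of $\KG$). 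Passing to complements and invoking Corollary~\ref{cor:q-cent-hered}(c), so that $C^\mc=\KP^B\setminus C$ for q-central $C$, then identifies the q-central quantum open subsets with the $\KG^{-1}(U)$, $U$ open.

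The main obstacle is part (c): correctly chaining the identifications q-central $\leftrightarrow$ central projection $\leftrightarrow$ subset of $\widehat B$, isolating within it the sub-chain central-and-q-closed $\leftrightarrow$ closed ideal $\leftrightarrow$ closed subset of $\widehat B$, and verifying the compatibility $\KP^B[z_\pi]=\KG^{-1}([\pi])$ that links the abstract central decomposition of $B^{**}_\ba$ to the spectral map $\KG$. I would invoke the standard bijection between closed two-sided ideals of $B$ and open subsets of $\widehat B$ (the Jacobson topology) as a known result.
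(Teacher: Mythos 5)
Your proof is correct and takes essentially the same approach as the paper: parts (a) and (b) use exactly the same ingredients (Lemma \ref{lem:PS-min-proj}(b)--(e) together with Proposition \ref{prop:atom-hered-q-distinct}(b)/Lemma \ref{lem:lattice}(b), and Lemma \ref{lem:sum-orth-closed} applied to commuting q-closed projections), and part (c) runs the same chain of identifications (q-central $\leftrightarrow$ central projection, central q-closed $\leftrightarrow$ closed two-sided ideal $\leftrightarrow$ closed subset of $\widehat{B}$, finishing with Corollary \ref{cor:q-cent-hered}(c)). The only difference is presentational: you verify the key compatibility $\KP^B[z_{[\pi]}]=\KG^{-1}([\pi]_\sim)$ via the block decomposition $B^{**}_\ba\cong\prod_{[\pi]_\sim\in\widehat{B}}\CB(\CH_\pi)$, whereas the paper obtains the same fact from the bijection between $\widehat{B}$ and the maximal weak$^*$-closed ideals of $B^{**}_\ba$ (the kernels of the normal extensions $\ti\pi$) together with a maximality argument.
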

\begin{proof}
(a) This is a direct consequence of Lemma \ref{lem:PS-min-proj}(b) and Proposition \ref{prop:atom-hered-q-distinct}(b).

\smnoind
(b) Notice that elements in $\CC^B$ are q-subsets because of parts (c) and (d) of Lemma \ref{lem:PS-min-proj}. 
It remains to verify Condition (S3). 
Suppose that $C_1,C_2\in \CC^B$ such that $C_1$ q-commutes with $C_2$. 
Lemma \ref{lem:PS-min-proj}(d) produces $p_1,p_2\in \CC(B)$ with $C_k=\KP^B[p_k]$ ($k=1,2$), and we know from Lemma \ref{lem:PS-min-proj}(e) that $p_1p_2 = p_2 p_1$. 
By Lemma \ref{lem:sum-orth-closed}, the projection $p_1 \vee p_2 = p_1 + p_2 - p_1p_2$ belongs to $\CC(B)$. 
Now, parts (c) and (d) of Lemma \ref{lem:PS-min-proj} implies that $C_1\vee C_2 = \KP^B[p_1\vee p_2]\in \CC^B$. 

\smnoind
(c) For an irreducible representation $(\pi, \KH)$ of $B$, we denote by $\ti \pi: B^{**}_\ba \to \CB(\KH)$ the weak$^*$-continuous extension of $\pi$. 
It is well-known that 
$$[\pi]_\sim \mapsto \ker \ti \pi$$ 
is a bijection from $\widehat{B}$ to the set of  maximal weak$^*$-closed ideals of $B^{**}_\ba$.
Moreover, maximal weak$^*$-closed ideals of $B^{**}_\ba$ are of the form $B^{**}_\ba (1-p)$ for a minimal central projection $p\in \CP_{B^{**}_\ba}$. 
On other hand, by parts (a), (c) and (e) of Lemma \ref{lem:PS-min-proj}, the set of minimal q-central q-subset in $\CQ(\KP^B)$ is precisely  
$$\big\{\KP^B[p]: p\in \CP_{B^{**}_\ba} \text{ is a minimal central projection}\big\}.$$ 
Consider a minimal central projection $p\in \CP_{B^{**}_\ba}$ and  $\phi \in \KP^B$.
Then $\bs_\phi \in \CP_{B^{**}_\ba}[p]$ (i.e, $\bs_\phi \leq p$) if and only if 
$$B^{**}_\ba(1-p)\subseteq \{x\in B^{**}_\ba: \phi(x^*x) =0 \},$$
which is equivalent to $B^{**}_\ba(1-p)\subseteq \ker \ti \pi_\phi$. 
However, as both $B^{**}_\ba(1-p)$ and $\ker \ti \pi_\phi$ are maximal weak$^*$-closed ideals of $B^{**}_\ba$, we know that they are the same. 
This gives the first statement. 

Let us consider $q\in \CC(B)$ to be a central projection, and let $L_q:= B^{**}_\ba(1-q)\cap  B$ be the corresponding closed ideal. 
As in the above, for any $\phi\in \KP^B$, one has $\phi\in \KP^B[q]$ if and only if $L_q \subseteq \ker \pi_\phi$.  
This, together with  Lemma \ref{lem:PS-min-proj}(d), gives the bijectivity between closed subsets of $\widehat{B}$ and q-central elements in $\CC^B$.

Finally, part (a) above and Corollary \ref{cor:q-cent-hered}(c) tell us that the q-complement of a q-central element in $\CQ(\KP^B)$ coincides with its ordinary complement and is also q-central. 
Thus, the statement concerning open subset of $\widehat{B}$ follows from the statement concerning closed subset of $\widehat{B}$. 
\end{proof}

\medskip

\begin{prop}
Let $A$ and $B$ be two $C^*$-algebras. 
If $\Theta: B\to A$ is a Jordan $^*$-isomorphism, then $\Theta^*|_{\KP^A}: \KP^A \to \KP^B$ is a strict quantum homeomorphism. 
\end{prop}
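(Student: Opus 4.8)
The plan is to show that $\Psi := \Theta^*|_{\KP^A}$ is a well-defined bijection $\KP^A \to \KP^B$ that preserves $\neq_\mo$ in both directions and sends $\CC^A$ exactly onto $\CC^B$. First I would check that $\Psi$ maps pure states to pure states: since $\Theta$ is a Jordan $^*$-isomorphism, it is a positive unital (or nondegenerate) order isomorphism, and the dual map $\Theta^*$ carries states to states; moreover Jordan isomorphisms preserve the facial/extremal structure of the state space, so pure states go to pure states. Bijectivity of $\Psi$ follows because $\Theta^{-1}$ is again a Jordan $^*$-isomorphism and $(\Theta^{-1})^*$ provides the inverse of $\Psi$ on pure states.

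Next I would verify that $\Psi$ preserves the q-distinctness relation $\neq_\mo$ in both directions. The cleanest route is through support projections: for $\phi \in \KP^A$ one has $\bs_{\Psi(\phi)} = \Theta^{**}(\bs_\phi)$, using that the weak-$^*$-continuous extension $\Theta^{**}$ is a normal Jordan $^*$-isomorphism on the biduals and hence carries the support projection of $\phi$ to the support projection of $\phi\circ\Theta$. A normal Jordan $^*$-isomorphism maps orthogonal projections to orthogonal projections (since it preserves the Jordan product and hence annihilates $\bs_\phi \bs_\psi + \bs_\psi \bs_\phi = 0$, forcing $\Theta^{**}(\bs_\phi)\Theta^{**}(\bs_\psi)=0$). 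Thus $\bs_\phi \bs_\psi = 0$ if and only if $\bs_{\Psi(\phi)}\bs_{\Psi(\psi)}=0$, which is exactly $\phi \neq_\mo \psi \iff \Psi(\phi)\neq_\mo \Psi(\psi)$. This shows $\Psi$ is a strict quantum bijection.

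Finally I would match the quantum topologies, i.e. show $\CC^B = \{\Psi(C): C\in \CC^A\}$. Here I would invoke Lemma \ref{lem:PS-min-proj}(d), which identifies $\CC^A$ with $\{\KP^A[q]: q\in\CC(A)\}$ and likewise for $B$. Using the relation $\bs_{\Psi(\phi)} = \Theta^{**}(\bs_\phi)$, one checks that $\Psi(\KP^A[q]) = \KP^B[\Theta^{**}(q)]$ for every $q\in \CP_{A^{**}_\ba}$. It then remains to confirm that $\Theta^{**}$ restricts to a bijection $\CC(A)\to\CC(B)$ between q-closed projections; this is where I expect the main technical content to lie. Since $q\in\CC(A)$ is the image under $\Lambda_A$ of a closed projection, and closed projections are characterized via increasing nets in $A_+$ (or via the $M(A)$-valued spectral characterization of Proposition \ref{prop:atom-Thm2.2-APT}), the normality and isometry of $\Theta^{**}$ together with $\Theta(A)=B$ should transport closed projections to closed projections and hence q-closed to q-closed. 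Granting this, $\Psi$ carries $\{\KP^A[q]:q\in\CC(A)\}$ onto $\{\KP^B[q']:q'\in\CC(B)\}$, i.e. $\CC^A$ onto $\CC^B$, completing the proof that $\Psi$ is a strict quantum homeomorphism.

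The step I expect to be the genuine obstacle is verifying that $\Theta^{**}$ preserves q-closedness of projections, since the definition of closed projection is not purely order-theoretic but involves approximation by elements of the algebra; the Jordan (rather than associative) nature of $\Theta$ means I must be careful that the relevant monotone-limit and spectral-projection structure is preserved, which is precisely the content controlled by Proposition \ref{prop:atom-Thm2.2-APT} and the fact that $\Theta^{**}$ is a normal Jordan $^*$-isomorphism carrying $A$ onto $B$.
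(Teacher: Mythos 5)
Your proposal is correct and follows essentially the same route as the paper: the paper's (very terse) proof likewise takes as known that $\Theta^*(\KP^A)=\KP^B$ with preservation of $\neq_\mo$, notes that $\Theta^{**}$ identifies $\CC(A)$ with $\CC(B)$ simply because a Jordan $^*$-isomorphism is an order isomorphism (so the increasing-net definition of closed projections transports directly, with no need to invoke Proposition \ref{prop:atom-Thm2.2-APT}), and concludes via Lemma \ref{lem:PS-min-proj}(d), exactly as you do. The only slip is directional bookkeeping: since $\Theta:B\to A$, it is $(\Theta^{-1})^{**}=(\Theta^{**})^{-1}$, not $\Theta^{**}$, that carries $\bs_\phi$ to $\bs_{\phi\circ\Theta}$ and $\CC(A)$ onto $\CC(B)$, but this is harmless since the situation is symmetric in $\Theta$ and $\Theta^{-1}$.
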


\medskip

Indeed, it is well-known that $\Theta^*(\KP^A) = \KP^B$ and $\Theta^*$ respects the q-distinctness relations. 
Moreover, as $\Theta^{**}(\CC(A)) = \CC(B)$ (note that a Jordan $^*$-isomorphism is an order isomorphism), we know from Lemma \ref{lem:PS-min-proj}(d) that  $\Theta^*|_{\KP^A}(\CC^A) = \CC^B$. 

\medskip

Now, we can present the main theorem of this section. 
Let us denote by $\widehat{A}^2$ the set of all irreducible $^*$-representations of a $C^*$-algebra $A$ with dimensions dominated  by $2$ (i.e. including those 1-dimensional ones). 
Set $A_0:=\bigcap_{\pi\in \widehat{A}^2}\ker \pi$, and consider $j_A: \KP^{A_0}\to \KP^A$ to be the map given by extensions of pure states. 

\medskip

\begin{thm}\label{thm:main2}
Let $A$ and $B$ be $C^*$-algebras, and $\Psi: \KP^A \to \KP^B$ be a strict quantum homeomorphism.

\smnoind
(a) If $\BM_2$ is not a quotient $C^*$-algebra of $A$, then there is a unique Jordan $^*$-isomorphism $\Gamma: B\to A$ such that $\Psi = \Gamma^*|_{\KP^A}$. 

\smnoind
(b) There is a Jordan $^*$-isomorphism $\Gamma_0: B_0\to A_0$ with $\Psi\circ j_A = j_B\circ \Gamma_0^*|_{\KP^{A_0}}$. 
\end{thm}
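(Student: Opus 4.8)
The plan is to reduce part (a) to the Dye-type result Theorem \ref{thm:main}, and then to bootstrap part (b) to part (a) after peeling off the low-dimensional part of the spectrum. For part (a), I would first convert $\Psi$ into a bijection of q-closed projections. By Lemma \ref{lem:PS-min-proj}(b),(c) together with Lemma \ref{lem:lattice}(d), the strict quantum bijection $\Psi\colon\KP^A\to\KP^B$ induces an ortholattice isomorphism $\CQ(\KP^A)\to\CQ(\KP^B)$, hence, via the identifications $\CP_{A^{**}_\ba}\cong\CQ(\KP^A)$ and $\CP_{B^{**}_\ba}\cong\CQ(\KP^B)$, an ortholattice isomorphism $\Upsilon\colon\CP_{A^{**}_\ba}\to\CP_{B^{**}_\ba}$ with $\Upsilon(\bs_\phi)=\bs_{\Psi(\phi)}$ and $\Psi(\KP^A[p])=\KP^B[\Upsilon(p)]$. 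Feeding $\CC^B=\{\Psi(C):C\in\CC^A\}$ into Lemma \ref{lem:PS-min-proj}(d) (which matches $\CC^A$ with $\CC(A)$ under $p\mapsto\KP^A[p]$) and using injectivity of $\KP^B[\cdot]$ gives $\Upsilon(\CC(A))=\CC(B)$, so $\Phi:=\Upsilon|_{\CC(A)\setminus\{0\}}$ is a strict quantum bijection of q-closed projections for the orthogonality relations. Since $\BM_2$ is not a quotient of $A$, Theorem \ref{thm:main} produces a Jordan $^*$-isomorphism $\Theta\colon A\to B$ extending to $\bar\Phi\colon A^{**}_\ba\to B^{**}_\ba$ with $\bar\Phi\circ\Lambda_A=\Lambda_B\circ\Theta^{**}$ and $\bar\Phi|_{\CC(A)}=\Phi$; as $\bar\Phi$ and $\Upsilon$ are ortholattice isomorphisms agreeing on the (q-closed) atoms $\CP^\mm_{A^{**}}$, they coincide on all of $\CP_{A^{**}_\ba}$.

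I would then set $\Gamma:=\Theta^{-1}\colon B\to A$ and verify $\Psi=\Gamma^*|_{\KP^A}$ by comparing supports. For $\omega\in\KP^A$, the state $\omega\circ\Gamma=\Gamma^*\omega$ is pure (pullback of a pure state along a Jordan $^*$-isomorphism), and its support is $\Theta^{**}(\bs_\omega)$, which $\Lambda_B$ carries to $\bar\Phi(\bs_\omega)=\Upsilon(\bs_\omega)=\bs_{\Psi(\omega)}$. Since a pure state is determined by its support (Lemma \ref{lem:PS-min-proj}(b)), $\omega\circ\Gamma=\Psi(\omega)$. Uniqueness is immediate because the pure states of $A$ separate the points of $A$.

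For part (b) the governing idea is that $\Psi$ must respect the dimension stratification of the spectrum. Since q-commutativity, and hence q-centrality, is an ortholattice notion, $\Psi$ carries (minimal) q-central elements of $\CC^A$ to those of $\CC^B$, so by Proposition \ref{prop:quantum top}(c) it induces a homeomorphism $\hat\Psi\colon\widehat{A}\to\widehat{B}$ covered by $\Psi$. On a fibre $\KG^{-1}([\pi])$ the relation $\neq_\mo$ is exactly orthogonality of unit vectors of $\CH_\pi$ modulo phase, so a maximal mutually q-distinct family is an orthonormal basis; as $\Psi$ restricts to a strict quantum bijection between fibres, it preserves the cardinality of such families, whence $\dim[\pi]=\dim\hat\Psi([\pi])$. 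Thus $\hat\Psi(\widehat{A}^2)=\widehat{B}^2$, and since $\hat\Psi$ is a homeomorphism it maps $\hull(A_0)=\overline{\widehat{A}^2}$ onto $\hull(B_0)=\overline{\widehat{B}^2}$; consequently $\Psi$ maps $j_A(\KP^{A_0})$, the union of fibres off $\hull(A_0)$, onto $j_B(\KP^{B_0})$.

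Finally I would identify $\KP^{A_0}$ with the q-subset $j_A(\KP^{A_0})\subseteq\KP^A$: the map $j_A$ is injective with $\bs_{j_A(\sigma)}=\bs_\sigma$, so (using that $\neq_\mo$ is hereditary, Proposition \ref{prop:quantum top}(a)) it is a strict quantum bijection onto its image, and a hull–kernel computation shows $\CC^{A_0}$ corresponds to the relative topology $\{C\cap j_A(\KP^{A_0}):C\in\CC^A\}$; the nontrivial inclusion follows by approximating $x\in L$ by $e_jx\in L\cap A_0$ along an approximate unit $(e_j)$ of $A_0$. The same holds for $B_0$, so $\Psi$ restricts to a strict quantum homeomorphism $\KP^{A_0}\to\KP^{B_0}$. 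As every irreducible representation of the ideal $A_0$ has dimension at least $3$, $\BM_2$ is not a quotient of $A_0$, and part (a) applied to $A_0,B_0$ yields a Jordan $^*$-isomorphism $\Gamma_0\colon B_0\to A_0$ with $\Psi\circ j_A=j_B\circ\Gamma_0^*|_{\KP^{A_0}}$. I expect the main obstacle to be the middle step of part (b): proving that $\Psi$ preserves fibre dimension and thereby sends the $A_0$-part of the spectrum onto the $B_0$-part, in tandem with the hull–kernel identification of $\CC^{A_0}$ with the relative quantum topology, which is where the structure theory of ideals genuinely enters.
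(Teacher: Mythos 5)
Your proposal is correct and follows essentially the same route as the paper: in part (a) you convert $\Psi$ into a strict quantum bijection $\CC(A)\setminus\{0\}\to\CC(B)\setminus\{0\}$ via Lemma \ref{lem:PS-min-proj} and invoke Theorem \ref{thm:main}, and in part (b) you peel off the q-central quantum closed subset $\KG^{-1}(\widehat{A}^2)$ and apply part (a) to $A_0$ and $B_0$, exactly as the paper does. The only differences are executional details in part (b) — you detect fibre dimension by counting maximal mutually q-distinct families (orthonormal bases), where the paper only needs to distinguish the $1$- and $2$-dimensional fibres directly, and you identify $\CC^{A_0}$ with the relative quantum topology by an approximate-unit hull--kernel argument, where the paper instead matches hereditary $C^*$-subalgebras of $A_0$ with those of $A$ contained in $A_0$ — and both variants are sound.
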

\begin{proof}
(a) For $p\in \CC(A)$, we know from the assumption on $\Psi$ and Lemma \ref{lem:PS-min-proj}(d) that there is a unique element $\Phi(p)\in \CC(B)$ satisfying
$$\KP^B\big[\Phi(p)\big]:= \Psi\big(\KP^A[p]\big).$$
The equality $\CC^B = \{\Psi(C): C\in \CC^A \}$ implies that  $\Phi$ is a bijection from $\CC(A)\setminus \{0\}$ onto $\CC(B)\setminus \{0\}$. 
Moreover, as $\Psi$ preserves the q-distinctness relations in both ways, we know that $\Phi$ is a strict quantum bijection. 
By Theorem \ref{thm:main}, there is a Jordan $^*$-isomorphism $\Theta:A\to B$ such that $\Theta^{**}(p) = \Phi(p)$ for every $p\in \CC(A)$. 
When $\psi\in \KP^B$, one has $\Theta^*(\psi)\in \KP^A$ and $\bs_{\Theta^*(\psi)} = \Phi^{-1}(\bs_\psi)$. 
Thus, if we set $\Gamma:= \Theta^{-1}$, then the required equality follows from Lemma \ref{lem:PS-min-proj}(b).

\smnoind
(b) Let $\KP^A_2:= \KG^{-1}(\widehat{A}^2)$ (see Proposition \ref{prop:quantum top}(c)), and set 
$$\KP^A_0 := \KP^A\setminus \KP^A_2.$$ 
By \cite[Proposition 4.4.10]{Ped79}, $\widehat{A}^2$ is a closed subset of $\widehat{A}$. 
Hence, Proposition \ref{prop:quantum top}(c) tells us that $\KP^A_2$ is a q-central element in $\CC^A$. 
Moreover, $\widehat{A}\setminus \widehat{A}^2$ is strictly quantum homeomorphic to $\widehat{A_0}$, under the canonical $^*$-homomorphism $\Delta_A: A \to M(A_0)$. 
In fact, $\omega \mapsto \omega\circ \Delta_A$ induces a bijection $\ti \Delta_A: \KP^{A_0} \to \KP^A_0$. 
Since $\ti \Delta_A$ preserves support projections (when we consider $A_0^{**}\subseteq A^{**}$ in the canonical way), it preserves the q-distinctness relations in both directions. 
Furthermore, as hereditary $C^*$-subalgebras of $A_0$ are precisely hereditary $C^*$-subalgebras of $A$ that are contained in $A_0$, we know that $\ti \Delta_A$ gives a bijection between quantum open subsets of $\KP^{A_0}$ and quantum open subsets of $\KP^A$ that are contained in quantum open subset $\KP^A_0$. 

On the other hand, by Proposition \ref{prop:quantum top}(c), if $\omega\in \KP^A$, then the set of elements whose GNS constructions equal to that of $\omega$ is the minimal q-central element in $\CQ(\KP^A)$ containing $\omega$. 
As $\Psi$ preserves the q-distinctness relations in both directions, it will preserve the minimal q-central q-subsets containing the corresponding elements in the respective Gelfand spectra. 
Consider $\omega\in \KP^A_2$. 
The dimension of the GNS construction of $\omega$ is either $1$ or $2$. 
If it is 1-dimensional, then $\{\omega \}$ is a q-central q-subset and so is $\{\Psi(\omega) \}$, which implies that the GNS construction of $\Psi(\omega)$ is 1-dimensional. 
Suppose that the GNS construction of $\omega$ is 2-dimensional.
Then there exists a unique element $\omega^\bot$ in the minimal q-central q-subset containing $\omega$ such that $\omega\neq_\mo\omega^\bot$. 
From this, one  can find a unique element $\Psi(\omega)^\bot$ in the minimal q-central q-subset containing $\Psi(\omega)$ with $\Psi(\omega)\neq_\mo \Psi(\omega)^\bot$. 
This implies that the GNS construction of $\Psi(\omega)$ is 2-dimensional.
Therefore, $\Psi(\KP^A_2) \subseteq \KP^B_2$. 
By symmetry, we know that $\Psi(\KP^A_2) = \KP^B_2$.

Finally, it is not hard to check, via the maps $\ti \Delta_A$ and $\ti\Delta_B$, that $\Psi$ induces a strict quantum homeomorphism $\bar\Psi: \KP^{A_0} \to \KP^{B_0}$. 
The conclusion then follows from part (a).  
\end{proof}



\medskip

\begin{cor}\label{cor:main2}
Let $A$ and $B$ be $C^*$-algebras. 
Suppose that  there is a strict quantum homeomorphism $\Psi: \KP^A \to \KP^B$.

\smnoind
(a) If $A$ is a primitive $C^*$-algebra such that $\BM_2$ is not a quotient $C^*$-algebra of $A$, then there is a map $\Theta: A\to B$ which is either a $^*$-isomorphism or a $^*$-anti-isomorphism such that $\Psi^{-1} = \Theta^*|_{\KP^B}$. 

\smnoind
(b) If $A = \BM_2$, then $A$ and $B$ are either $^*$-isomorphic or $^*$-anti-isomorphic. 
\end{cor}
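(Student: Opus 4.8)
The plan is to derive both statements from Theorem \ref{thm:main2} together with the classical decomposition theorem of Kadison, which asserts that every Jordan $^*$-isomorphism between $C^*$-algebras splits as a direct sum of a $^*$-isomorphism and a $^*$-anti-isomorphism; concretely, for a Jordan $^*$-isomorphism $\Gamma\colon B\to A$ there is a decomposition $A = J_1\oplus J_2$ into closed two-sided ideals such that $\Gamma$ restricts to a $^*$-isomorphism of the corresponding ideal of $B$ onto $J_1$ and to a $^*$-anti-isomorphism onto $J_2$.

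For part (a), since $A$ is primitive and $\BM_2$ is not a quotient of $A$, Theorem \ref{thm:main2}(a) supplies a (unique) Jordan $^*$-isomorphism $\Gamma\colon B\to A$ with $\Psi = \Gamma^*|_{\KP^A}$. Applying Kadison's decomposition to $\Gamma$ gives $A = J_1\oplus J_2$ as above. A primitive $C^*$-algebra is prime, so it admits no decomposition into a direct sum of two nonzero closed ideals (their product lies in $J_1\cap J_2 = 0$); hence $J_1 = 0$ or $J_2 = 0$, and $\Gamma$ is therefore either a $^*$-anti-isomorphism or a $^*$-isomorphism of $B$ onto $A$. Setting $\Theta := \Gamma^{-1}\colon A\to B$ then yields a map that is either a $^*$-isomorphism or a $^*$-anti-isomorphism, and since $\Theta^*\circ\Gamma^* = \id$ on $A^*$, one reads off $\Theta^*|_{\KP^B} = \Psi^{-1}$, as required.

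For part (b) the difficulty is that $\BM_2$ is a quotient of $A = \BM_2$ itself, so Theorem \ref{thm:main2}(a) is unavailable and $B$ must be reconstructed intrinsically from the quantum-topological data; because $\BM_2$ is $^*$-anti-isomorphic to itself it suffices to prove $B\cong \BM_2$. First I would record that $\Psi$ induces, by Lemma \ref{lem:lattice}(d), an ortholattice isomorphism $\bar\Psi\colon \CQ(\KP^A)\to \CQ(\KP^B)$, and that q-commutativity, hence q-centrality and minimal q-centrality, are ortholattice notions preserved by $\bar\Psi$. By Proposition \ref{prop:quantum top}(c) the minimal q-central elements of $\CQ(\KP^B)$ are in bijection with $\widehat{B}$; since $\widehat{\BM_2}$ is a single point, $\CQ(\KP^{\BM_2})$ has exactly one minimal q-central element, so the same holds for $\CQ(\KP^B)$ and thus $\widehat{B}$ is a single point. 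Consequently $B$ has a unique irreducible representation class, its single primitive ideal is $\{0\}$, and $B$ is primitive with a faithful irreducible representation $\pi$ on some Hilbert space $\KH_\pi$, so that $B^{**}_\ba\cong \CB(\KH_\pi)$.

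It remains to pin down $\dim \KH_\pi = 2$. Since $\Psi(\{x\}^\mc) = \{\Psi(x)\}^\mc$, every pure state of $\KP^B$ has exactly as many $\neq_\mo$-partners as each point of $\KP^{\BM_2}$, namely exactly one (antipodal on the Bloch sphere, as $\{x\}^\mc$ is a singleton there). With only one representation class, the $\neq_\mo$-partners of the state attached to a unit vector $\xi\in\KH_\pi$ are precisely the states attached to the unit vectors of $\xi^\perp$, and this set is a singleton exactly when $\dim\KH_\pi = 2$. Hence $\dim\KH_\pi = 2$, and $\pi(B)$ is an irreducibly acting $C^*$-subalgebra of $\CB(\KH_\pi)=\BM_2$; being finite-dimensional it is weakly closed and equals its bicommutant $\BM_2$, so $\pi\colon B\to\BM_2$ is a $^*$-isomorphism. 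Therefore $B$ is $^*$-isomorphic, and $^*$-anti-isomorphic, to $\BM_2$. The main obstacle is exactly this part (b) reconstruction: the Dye-type Theorem \ref{thm:main} genuinely fails for $\BM_2$ (Remark \ref{rem:main-general-case}(a)), so the conclusion cannot be obtained from a Jordan isomorphism of q-closed projections and must instead be extracted from the global invariants---the cardinality of $\widehat{B}$ and the $\neq_\mo$-partner count---encoded in the quantum-topological structure.
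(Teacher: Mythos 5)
The one genuine flaw is the supporting lemma you invoke in part (a). The statement that every Jordan $^*$-isomorphism $\Gamma\colon B\to A$ between $C^*$-algebras splits along a decomposition $A=J_1\oplus J_2$ into closed two-sided ideals is \emph{false} for general $C^*$-algebras. For a counterexample, take $A=\{f\in C([-1,1],\BM_2): f(0)\in\BC 1\}$ and define $\Gamma(f)(t)=f(t)$ for $t\le 0$ and $\Gamma(f)(t)=f(t)^{\mathrm{T}}$ (transpose) for $t\ge 0$; this is a Jordan $^*$-automorphism of $A$ which is neither multiplicative nor anti-multiplicative, yet $A$ admits no non-trivial direct sum decomposition into ideals (its center is $C([-1,1])$, which has no non-trivial projections). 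The ideal/central-projection decomposition is a von Neumann algebra theorem; for $C^*$-algebras the splitting only takes place in the bidual, and the relevant central projection need not be visible from $A$. Fortunately, your application is to a primitive $A$, where the correct tool is Kadison's theorem that a Jordan $^*$-isomorphism onto a $C^*$-algebra acting irreducibly on a Hilbert space is either a $^*$-isomorphism or a $^*$-anti-isomorphism: compose $\Gamma$ with a faithful irreducible representation of $A$ (which exists by primitivity) and pull back. This is exactly the ``well-known fact'' the paper cites; with this substitution, the rest of your part (a) --- setting $\Theta=\Gamma^{-1}$ and reading off $\Theta^*|_{\KP^B}=\Psi^{-1}$ from $\Psi=\Gamma^*|_{\KP^A}$ --- is correct and agrees with the paper's proof.

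Apart from this, your architecture coincides with the paper's. Your part (b) is essentially the paper's argument (which in turn leans on the proof of Theorem \ref{thm:main2}(b)): the induced ortholattice isomorphism preserves minimal q-central elements, which by Proposition \ref{prop:quantum top}(c) correspond bijectively to $\widehat{B}$, so $\widehat{B}$ is a singleton because $\widehat{\BM_2}$ is; then the fact that every pure state of $\BM_2$ has exactly one $\neq_\mo$-partner, transported through $\Psi$, forces the unique irreducible representation of $B$ to be $2$-dimensional, whence $B\cong\BM_2$. Your additional observations (that $B$ is primitive with $B^{**}_\ba\cong\CB(\KH_\pi)$, and that an irreducibly acting finite-dimensional algebra equals $\BM_2$) are correct elaborations of the step the paper compresses into the phrase ``$B$ has exactly one irreducible $^*$-representation, and that this representation is of dimension $2$.''
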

\begin{proof}
(a) This follows from Theorem \ref{thm:main2}(a) and the well-known fact that Jordan $^*$-isomorphism between primitive $C^*$-algebras are either a $^*$-isomorphism or a $^*$-anti-isomorphism.

\smnoind
(b) Notice that 
\begin{equation}\label{eqt:q-closed-M2}
\CC^{\BM_2} = \{\emptyset\} \cup \big\{\KP^{\BM_2} \big\} \cup \big\{\{\omega\}: \omega\in \KP^{\BM_2} \big\}.
\end{equation}
For every $\omega \in \KP^{\BM_2}$, there exists exactly one $\omega^\bot\in \KP^{\BM_2}$ with  $\omega \neq_\mo \omega^\bot$. 
From this, and the existence of a strict quantum bijection from $\KP^{\BM_2}\setminus \{0\}$ to $\KP^B\setminus \{0\}$, we see that $B$ has exactly one irreducible $^*$-representation, and that this representation is  of dimension $2$ (see the proof of Theorem \ref{thm:main2}(b)). 
Consequently,  $B\cong \BM_2$. 
\end{proof}

\medskip

\begin{eg}\label{eg:M2}
	Fix an element $\omega_1\in \KP^{\BM_2}$ and consider $\omega_1^\bot$ to be an element in $\KP^{\BM_2}$ as in the proof of Corollary \ref{cor:main2}(b). 
	Define $\Psi:\KP^{M_2}\to \KP^{M_2}$ such that $\Psi(\omega_1) = \omega_1^\bot$, $\Psi(\omega_1^\bot) = \omega_1$ and $\Psi(\omega) = \omega$ when $\omega\notin \{\omega_1, \omega_1^\bot\}$. 
	Then $\Psi$ is a strict quantum homeomorphism (see \eqref{eqt:q-closed-M2}), but it cannot be induced by a Jordan $^*$-automorphism on  $\BM_2$ (because $\Psi$ cannot be extended to an affine map on the state space of $\BM_2$). 
\end{eg}

\medskip

\section{Conclusion}

\medskip

As seen in Proposition \ref{prop:atom-hered-q-distinct}, the naive notion of a quantum set (i.e., a set with a more restrictive form of distinctness between its elements) actually capture everything about quantum logic and can be regarded as the underlying ``quantum set theory'' behind quantum logic. 
Moreover, some  structures related to Physics are related to quantum sets (see Examples \ref{eg:uncert} and \ref{eg:tran-prob}). 

\medskip

On the other hand, the notion of quantum topological spaces can be regarded as semi-classical objects (as they are quantum sets equipped with special collections of subsets). 
A particular semi-classical object, known as the Gelfand spectrum, that associate with a quantum system modeled on the self-adjoint part $B_\sa$ of a $C^*$-algebra $B$, remembers $B_\sa$ up to Jordan isomorphism, when $B$ has no $2$-dimensional irreducible representation. 
In other words, one obtains a faithful image of quantum system in term of quantum topological space (under a mild assumption). 
Consequently, the semi-classical objects of Gelfand spectra serve a link from quantum systems to classical systems.    

\medskip

\section*{Acknowledgement}

\medskip

The authors are supported by the National Natural Science Foundation of China (11871285). 
The first named author is also supported by Nankai Zhide Foundation and the second named author is also supported by China Scholarship Council (201906200101). 

The two authors would also like to thank C.A. Akemann and L.G. Brown for some discussions on Proposition \ref{prop:atom-Thm2.2-APT}. 



\begin{thebibliography}{99}

\bibitem{Adams}
D.H. Adams, The completion by cuts of an orthocomplemented modular lattice, Bull. Austral. Math. Soc. \textbf{1} (1969), 279-280.




\bibitem{AHS}
E.M. Alfsen, H. Hanche-Olsen and F.W. Schultz, State spaces of $C^*$--algebras, Acta Math. \textbf{144} (1980), 267-305. 
 
\bibitem{Akemann68}
C.A. Akemann, Sequential convergence in the dual of a von Neumann algebra, Comm. Math. Phys. \textbf{7} (1968), 222--224.

\bibitem{Ake69}
C.A. Akemann, The General Stone-Weierstrass Problem, J. Funct. Anal. \textbf{4} (1969), 277--294.

\bibitem{Ake71}
C.A. Akemann, A Gelfand representation theory for $C^*$-algebras,
Pacific J. Math. \textbf{39} (1971), 1--11.

\bibitem{APT73}
C.A. Akemann, G.K. Pedersen and J. Tomiyama, Multipliers of $C^*$-algebras,
J.\ Funct.\ Anal., \textbf{13} (1973), 277-301.

\bibitem{Ban}
T. Banica, Quantum automorphism groups of homogeneous graphs, J. Funct. Anal. \textbf{224} (2005), 243-280. 

\bibitem{BBC}
T. Banica, J. Bichon and G. Chenevier, Graphs having no quantum symmetry, Ann. Inst. Fourier (Grenoble) \textbf{57} (2007), 955-971. 



\bibitem{BGS}
J. Bhowmick, D. Goswami and A. Skalski, Quantum isometry groups of 0-dimensional manifolds, Trans. Amer. Math. Soc. \textbf{363} (2011), 901-921.

\bibitem{Brown14}
L.G. Brown, Large $C^*$-algebras of universally measurable operators, Quart. J. Math. \textbf{65} (2014), 851-855. 


\bibitem{BW93}
L.J. Bunce and J.D.M. Wright, On Dye's theorem for Jordan operator algebras, Exposition. Math. \textbf{11} (1993), 91--95. 

\bibitem{DRZ}
C. Dietzel, W. Rump and X. Zhang, One-sided orthogonality, orthomodular spaces, quantum sets, and a class of Garside groups, J. Algebra \textbf{526} (2019), 51-80. 


\bibitem{Dye}
H.A. Dye, On the geometry of projections in certain operator algebras, Ann. Math. \textbf{61} (1955), 73--89.


\bibitem{FR}
U.A. First and T. R\"{u}d, On Uniform admissibility of unitary and smooth representations, Arch. Math. \textbf{112} (2019), 169-179. 

\bibitem{GK}
R. Giles and H. Kummer, A non-commutative generalization of topology, Indiana Univ. Math. J. \textbf{21} (1971/72), 91-102.


\bibitem{Gud}
S. Gudder, Measure and integration in quantum set theory, in  \emph{Current issues in quantum logic (Erice, 1979)},
Ettore Majorana Internat. Sci. Ser.: Phys. Sci. \textbf{8}, Plenum, New York-London (1981), 341-352. 

\bibitem{Ham81}
M. Hamana, Regular embeddings of $C^*$-algebras in monotone complete $C^*$-algebras, 
J. Math. Soc. Japan \textbf{33} (1981), 159-183. 


\bibitem{Ham}
J. Hamhalter, Dye's theorem and Gleason's theorem for $AW^*$-algebras, J. Math. Anal. Appl. \textbf{422} (2015), 1103--1115. 



\bibitem{JSW} 
L. Junk, S. Schmidt and M. Weber, Almost all trees have quantum symmetry, Arch. Math. \textbf{115} (2020), 367-378. 

\bibitem{Kal83}
G. Kalmbach, \emph{Orthomodular lattices}, Lond. Math. Soc. Mono. \textbf{18}, Academic Press (1983). 


\bibitem{Kal98}
G. Kalmbach, \emph{Quantum measures and spaces}, Math. and its Appl. \textbf{453}, Kluwer Academic Publishers  (1998).




\bibitem{MacN}
H.M. MacNeille, Partially ordered sets, Trans. Amer. Math. Soc. \textbf{42} (1937), 416-460.


\bibitem{Mihara}
T. Mihara, Characterisation of the Berkovich spectrum of the Banach algebra of bounded continuous functions, Doc. Math. \textbf{19} (2014), 769-799. 

\bibitem{Mur}
G.J. Murphy, \emph{$C^*$-algebras and operator theory}, Academic Press (1990).

\bibitem{MRV}
B. Musto, D. Reutter, D. Verdon, The Morita theory of quantum graph isomorphisms, Comm. Math. Phys. \textbf{365} (2019), 797-845. 


\bibitem{Ng-Cat-QS}
C.K. Ng, Categories of operator algebras and ortho-sets, in preparation. 

\bibitem{NW}
C.K. Ng and N.C. Wong, A Murray-von Neumann type classification of $C^*$-algebras, in \emph{Operator Semigroups Meet Complex Analysis, Harmonic Analysis and Mathematical Physics, Herrnhut, Germany (in honor of Prof. Charles Batty for his 60th birthday)}, Operator Theory: Advance and Applications, \textbf{250}, Springer Internat. Publ. (2015), 369--395.

\bibitem{Ped72}
G.K. Pedersen, Applications of weak$^*$-semicontinuity in $C^*$-algebra theory, Duke Math. J. \textbf{39} (1972), 431-450.

\bibitem{Ped79}
G.K. Pedersen,
\emph{$C^*$-algebras and their automorphism groups},
Academic Press (1979).



\bibitem{Schles}
K.-G. Schlesinger, Toward quantum mathematics I: From quantum set theory to universal quantum mechanics, J. Math. Phys. \textbf{40} (1999), 1344-1358.

\bibitem{Sch}
S. Schmidt, The Petersen graph has no quantum symmetry, Bull. Lond. Math. Soc. \textbf{50} (2018), 395-400. 


\bibitem{Shu82}
F.W. Shultz, Pure states as a dual object for $C^*$-algebras, Comm. Math. Phys. \textbf{82} (1981/82), 497-509. 


\bibitem{SV07}
I. Stubbe and B. Van Steirteghem, Propositional systems, Hilbert lattices and generalized Hilbert spaces, in \emph{Handbook of quantum logic and quantum structures}, Elsevier Sci. B. V. (2007), 477-523. 





\bibitem{Takeuti}
G. Takeuti, Quantum set theory, in \emph{Current issues in quantum logic (Erice, 1979)}, 
Ettore Majorana Internat. Sci. Ser.: Phys. Sci., \textbf{8}, Plenum, New York-London (1981), 303-322.

\bibitem{TK}
S. Titani and H. Kozawa, Quantum set theory, Internat. J. Theoret. Phys. \textbf{42} (2003), 2575-2602.

\bibitem{Vlad}
D.A. Vladimirov, \emph{Boolean algebras in analysis}, Kluwer Academic (2002).

\bibitem{Walker}
J. W. Walker, From graphs to ortholattices and equivariant maps, 
J. Combin. Theory Ser. B \textbf{35} (1983), 171-192.

\end{thebibliography}
\end{document}